\def\BibTeX{{\rm B\kern-.05em{\sc i\kern-.025em b}\kern-.08em
    T\kern-.1667em\lower.7ex\hbox{E}\kern-.125emX}}
\begin{document}
\title{Gaussian Process Latent Force Models for Learning and Stochastic Control of Physical Systems}
\author{Simo~S\"arkk\"a,~\IEEEmembership{Senior Member,~IEEE,}
        Mauricio~A.~\'Alvarez, 
        and~Neil~D.~Lawrence%
\thanks{Manuscript received \today; revised XXX.}%
\thanks{Simo~S\"arkk\"a is with the Department of Electrical Engineering and Automation (EEA), Aalto University, Rakentajanaukio 2c, 02150 Espoo, Finland (simo.sarkka@aalto.fi). Tel. +358 50 512 4393}%
\thanks{Mauricio A. \'Alvarez is with the Department of Computer Science, University of Sheffield,
Sheffield, UK S1 4DP}%
\thanks{Neil D. Lawrence is with the Department of Computer Science, University of Sheffield,
Sheffield, UK S1 4DP and with Amazon, Cambridge, UK.}}

\newtheorem{theorem}{Theorem}[section]
\newtheorem{corollary}{Corollary}[section]
\newtheorem{remark}{Remark}[section]
\newtheorem{lemma}{Lemma}[section]

\newcommand{\mauricio}[1]{{\color{blue}[#1]}}
\newcommand{\simo}[1]{{\color{red}[#1]}}

\maketitle

\begin{abstract}
This article is concerned with learning and stochastic control in physical systems which contain unknown input signals. These unknown signals are modeled as Gaussian processes (GP) with certain parametrized covariance structures. The resulting latent force models (LFMs) can be seen as hybrid models that contain a first-principles physical model part and a non-parametric GP model part. We briefly review the statistical inference and learning methods for this kind of models, introduce stochastic control methodology for the models, and provide new theoretical observability and controllability results for them. 
\end{abstract}

\begin{IEEEkeywords}
Machine learning, Stochastic optimal control, Stochastic systems, System identification, Kalman filtering
\end{IEEEkeywords}

\section{Introduction}
\label{sec:introduction}

This article is concerned with the methodology and theory for learning and stochastic control in Gaussian process latent force models (LFMs) \cite{Alvarez+Luengo+Lawrence:2009,Alvarez:2010,Alvarez+Luengo+Lawrence:2013,Hartikainen+Sarkka:2011,Hartikainen+Seppanen+Sarkka:2012}. An example of such LFM is a second order differential equation model of a physical system 
\begin{equation}
\begin{split}
  \frac{\mathrm{d}^{2}f(t)}{\mathrm{d}t^{2}} + \lambda \, \frac{\mathrm{d}f(t)}{\mathrm{d}t}
  + \gamma \, f(t) = u(t) + c(t),
\end{split}
\label{eq:sde0}
\end{equation}
where $\lambda, \gamma > 0$ are parameters of the physical system, the input signal $u(t)$ and the solution $f(t)$ are unknown, and $c(t)$ is a control function to be optimized. Further assume that we measure the function $f(t)$ via noisy measurements at discrete instants of time $t_1,t_2,\ldots,t_n$ via the model  $y_k = f(t_k) + \epsilon_k$, where $\epsilon_k$ is a Gaussian measurement noise and $k = 1,\ldots,n$.

Another example of a problem of interest is the controlled heat equation which we again measure via noisy measurements:
\begin{equation}
\begin{split}
  \frac{\partial f(\mathbf{x},t)}{\partial t} &=
  D \, \nabla^2 \, f(\mathbf{x},t) - \lambda \, f(\mathbf{x},t)
  + u(\mathbf{x},t) + c(\mathbf{x},t),
\end{split}
\label{eq:spde0}
\end{equation}
where $D,\lambda > 0$ are given constants. The aim is to learn both the input signal $u(\mathbf{x},t)$ and the function $f(\mathbf{x},t)$ from noisy observations $y_k = f(\mathbf{x}_k,t_k) + \epsilon_k$, and to design a control $c(\mathbf{x},t)$ for regulating the heat. 

The model \eqref{eq:sde0} is a special case of state-space models of the form
\begin{equation}
\begin{split}
  \frac{\mathrm{d}\mathbf{f}(t)}{\mathrm{d}t}
  &= \mathbf{A}_f \, \mathbf{f}(t) + \mathbf{B}_f \, \mathbf{u}(t) + \mathbf{M}_f \, \mathbf{c}(t), \\
  \mathbf{y}_k &= \mathbf{C}_f \, \mathbf{f}(t) + \boldsymbol{\epsilon}_k.
\end{split}
\label{eq:ssgen}
\end{equation}
where $\mathbf{f}(t)$, $\mathbf{u}(t)$, and $\mathbf{c}(t)$ are vector-valued functions, and $\mathbf{A}_f$, $\mathbf{B}_f$, $\mathbf{C}_f$, and $\mathbf{M}_f$ are given matrices with appropriate dimensions. The second model \eqref{eq:spde0} is a special case of spatio-temporal state-space models
\begin{equation}
\begin{split}
  \frac{\partial \mathbf{f}(\mathbf{x},t)}{\partial t}
  &= \mathbf{\mathcal{A}}_f \, \mathbf{f}(\mathbf{x},t) + \mathbf{B}_f \, \mathbf{u}(\mathbf{x},t) + \mathbf{M}_f \, \mathbf{c}(\mathbf{x},t), \\
  \mathbf{y}_k &= \mathbf{C}_f \, \mathbf{f}(\mathbf{x}_k,t_k) + \boldsymbol{\epsilon}_k,
\end{split}
\label{eq:stsgen}
\end{equation}
where now $\mathbf{\mathcal{A}}_f$ is a matrix of spatial operators and $\mathbf{B}_f$, $\mathbf{C}_f$, and $\mathbf{M}_f$ are given matrices. 

In this article, we specifically concentrate on the above two general classes of models. The aim is to consider the problems of learning (estimating) the functions $\mathbf{f}(\cdot)$ and $\mathbf{u}(\cdot)$ from a set of noisy measurements $\{ \mathbf{y}_k \}$ as well as jointly design the optimal control function $\mathbf{c}(\cdot)$. When the input function $\mathbf{u}(\cdot)$ is modeled as a Gaussian process \cite{Rasmussen+Williams:2006} with a covariance structure allowing for a state-space representation \cite{Hartikainen+Sarkka:2010,Sarkka+Solin+Hartikainen:2013,Sarkka+Piche:2014}, then the models have a tight connection to classical stochastic control theory. In that case it turns out that we can readily apply some of the theory and methodology of Kalman filters and linear quadratic controllers on them provided that we recast the model as an augmented white-noise driven state-space system. 

The main contributions of the article are the stochastic optimal control methods for LFMs as well as the theoretical results in observability and controllability of the models. In particular, we show that although LFMs are observable in quite general conditions, they are never controllable. However, as we discuss in the article, the non-controllability is not a problem in applications, because they still are output-controllable with respect to the physical system part and hence the only uncontrollable part is the unknown input.

The learning methods for LFMs have previously been presented in conference and journal articles \cite{Alvarez+Luengo+Lawrence:2009,Alvarez:2010,Alvarez+Luengo+Lawrence:2013,Hartikainen+Sarkka:2011,Hartikainen+Seppanen+Sarkka:2012}, and they are also closely related to the regularization network methodology considered already earlier in \cite{DeNicolao:2001,DeNicolao:2003}. The learning problem is also related to so called input estimation problem that has previously been addressed in the target tracking literature (e.g. \cite{Bar-Shalom+Li+Kirubarajan:2001}) by replacing the input with a white or colored noise. Another approach to this problem is to use disturbance observers \cite{Chen:2016}. However, here we will specifically concentrate on the Gaussian process based machine learning point of view which allows for encoding prior information into the driving input as well as the use of modern machine learning methods for coping with the related hyperparameter estimation problems and model extensions. 

\subsection{Learning in Gaussian process latent force models}

In machine learning, Gaussian processes (GPs) \cite{Rasmussen+Williams:2006} are commonly used as prior distributions over functions $\mathbf{f}(\boldsymbol{\xi})$. When used for regression, the GP encodes the uncertainty we have over a function, before seeing the data. Given a set of noisy measurements pairs $\mathcal{D} = \{ (\boldsymbol{\xi}_k,\mathbf{y}_k) \}_{k=1}^n$ with, for example, $\mathbf{y}_k = \mathbf{f}(\boldsymbol{\xi}_k) + \boldsymbol{\epsilon}_k$, where $\boldsymbol{\epsilon}_k$ is a vector of Gaussian noises, we can then compute the posterior Gaussian process using the Gaussian process regression equations \cite{Rasmussen+Williams:2006} and use it to make predictions on test points. In the current article we consider cases where $\boldsymbol{\xi} = t$ is the time and $\boldsymbol{\xi} = (\mathbf{x},t)$, where the input consists of both spatial and time components.

In Gaussian process regression notation \cite{Rasmussen+Williams:2006} we write
\begin{equation}
\begin{split}
  \mathbf{f}(\boldsymbol{\xi}) &\sim \mathcal{GP}(\mathbf{0},\mathbf{K}(\boldsymbol{\xi},\boldsymbol{\xi}')), \\
  \mathbf{y}_k &= \mathbf{f}(\boldsymbol{\xi}) + \boldsymbol{\epsilon}_k,
\end{split}
\end{equation}
where $\mathbf{K}(\boldsymbol{\xi},\boldsymbol{\xi}')$ is a given covariance function, and the computational aim is to do inference on the posterior distribution of $f(\cdot)$ conditioned on the measurements $\mathcal{D}$ (obtained by the Bayes' rule) as well as on the parameters of the covariance function. Above, we have, without loss of generality, assumed that the a priori Gaussian process has zero mean.

As shown in \cite{Alvarez+Luengo+Lawrence:2009,Alvarez:2010,Alvarez+Luengo+Lawrence:2013,Hartikainen+Sarkka:2011,Hartikainen+Seppanen+Sarkka:2012}, given a model of the form \eqref{eq:ssgen} with $\mathbf{u}(t) \sim \mathcal{GP}(0,\mathbf{K}(t,t'))$ or a model of the form \eqref{eq:stsgen} with $\mathbf{u}(\mathbf{x},t) \sim \mathcal{GP}(0,\mathbf{K}(\mathbf{x},t;\mathbf{x}',t'))$ the functions $\mathbf{f}(t)$ and $\mathbf{f}(\mathbf{x},t)$ are Gaussian processes as well, and their covariance functions can be expressed in terms of the impulse response or Green's function of the (partial) differential equation together with the covariance function of $\mathbf{u}$. This allows us to reduce inference on LFMs to ordinary GP regression. 

Another point of view is discussed in \cite{Hartikainen+Sarkka:2011,Hartikainen+Seppanen+Sarkka:2012} (see also \cite{DeNicolao:2001,DeNicolao:2003}). In that approach the input GP $\mathbf{u}(t) \sim \mathcal{GP}(0,\mathbf{K}(t,t'))$ is converted into an equivalent state-space representation by using a spectral factorization:
\begin{equation}
\begin{split}
  \frac{\mathrm{d}\mathbf{z}(t)}{\mathrm{d}t}
  &= \mathbf{A}_u \, \mathbf{z}(t) + \mathbf{B}_u \, \mathbf{w}(t), \\
  \mathbf{u}(t) &= \mathbf{C}_u \, \mathbf{z}(t).
\end{split}
\label{eq:ssu}
\end{equation}
Here the state-vector typically consists of a set of derivatives of the process $\mathbf{z} = (\mathbf{u},\mathrm{d}\mathbf{u}/\mathrm{d}t,\ldots,\mathrm{d}^{s-1}\mathbf{u}/\mathrm{d}t^{s-1})$, and $\mathbf{w}(t)$ is a vector-valued white-noise process with a given spectral density matrix. The advantage of this kind of model formulation is that it allows for solving the GP regression problem using Kalman filters and smoothers \cite{Sarkka:2013} in $O(n)$ time when the traditional GP takes $O(n^3)$ time (here $n$ denotes the number of measurements). 

The same idea can be extended to spatio-temporal Gaussian processes \cite{Sarkka+Hartikainen:2012,Sarkka+Solin+Hartikainen:2013}. The conversion of a spatio-temporal covariance function into state-space form leads to a system of the form
\begin{equation}
\begin{split}
  \frac{\partial\mathbf{z}(\mathbf{x},t)}{\partial t} &= \mathbf{\mathcal{A}}_u \, \mathbf{z}(\mathbf{x},t)
  + \mathbf{B}_u \, \mathbf{w}(\mathbf{x},t), \\
   \mathbf{u}(\mathbf{x},t) &= \mathbf{C}_u \, \mathbf{z}(\mathbf{x},t),
\end{split}
\label{eq:ssu2}
\end{equation}
where $\mathbf{\mathcal{A}}_u$ is a matrix of linear operators (typically pseudo-differential operators) acting on the $\mathbf{x}$-variable and $\mathbf{w}(\mathbf{x},t)$ is a vector-valued time-white spatio-temporal Gaussian process with a given spectral density kernel. In this case the inference can be done using infinite-dimensional Kalman filters and smoothers which typically are approximated with their finite-dimensional counterparts. More details can be found in \cite{Sarkka+Hartikainen:2012,Sarkka+Solin+Hartikainen:2013}.

We can now combine the  state-space ODE \eqref{eq:ssgen} with the state-space representation of LFMs to obtain an augmented state-space representation of the LFM \cite{Hartikainen+Sarkka:2011,Hartikainen+Seppanen+Sarkka:2012}:
\begin{equation}
\begin{split}
  \frac{\mathrm{d}\mathbf{f}(t)}{\mathrm{d}t}
  &= \mathbf{A}_f \, \mathbf{f}(t)
  + \mathbf{B}_f \, \mathbf{C}_u \, \mathbf{u}(t) + \mathbf{M}_f \, \mathbf{c}(t), \\
  \frac{\mathrm{d}\mathbf{u}(t)}{\mathrm{d}t}
  &= \mathbf{A}_u \, \mathbf{u}(t) + \mathbf{B}_u \, \mathbf{w}(t), \\
  \mathbf{y}_k &= \mathbf{C}_f \, \mathbf{f}(t) + \boldsymbol{\epsilon}_k.
\end{split}
\label{eq:comb}
\end{equation}
If we now define
\begin{equation}
\begin{split}
  \mathbf{g} &= \begin{pmatrix}
	\mathbf{f} \\ \mathbf{u}
  \end{pmatrix}, \quad
  \mathbf{A}
  = \begin{pmatrix}
	\mathbf{A}_f & \mathbf{\mathbf{B}_f \, \mathbf{C}_u} \\
	\mathbf{0} & \mathbf{A}_u
  \end{pmatrix}, \quad
  \mathbf{M} = \begin{pmatrix} \mathbf{M}_f \\ \mathbf{0} \end{pmatrix}\\
  \mathbf{B}
  &= \begin{pmatrix}
	\mathbf{0} & \mathbf{B}_u
  \end{pmatrix}, \quad
  \mathbf{C}
  = \begin{pmatrix}
	\mathbf{C}_f & \mathbf{0}
  \end{pmatrix},
\end{split}
\label{eq:ssaugmats}
\end{equation}
then the model can be written as a white-noise driven model
\begin{equation}
\begin{split}
  \frac{\mathrm{d}\mathbf{g}(t)}{\mathrm{d}t}
  &= \mathbf{A} \, \mathbf{g}(t)
  + \mathbf{B} \, \mathbf{w}(t) + \mathbf{M} \, \mathbf{c}(t), \\
  \mathbf{y}_k &= \mathbf{C} \, \mathbf{g}(t) + \boldsymbol{\epsilon}_k.
\end{split}
\label{eq:ssaug}
\end{equation}
Spatio-temporal models \eqref{eq:stsgen} driven by Gaussian processes can also be often represented in a similar state-space form, which now becomes
\begin{equation}
\begin{split}
  \frac{\partial \mathbf{f}(\mathbf{x},t)}{\partial t}
  &= \mathbf{\mathcal{A}}_f \, \mathbf{f}(\mathbf{x},t)
  + \mathbf{B}_f \, \mathbf{C}_u \, \mathbf{z}(\mathbf{x},t) + \mathbf{M}_f \, \mathbf{c}(\mathbf{x},t), \\
  \frac{\partial\mathbf{z}(\mathbf{x},t)}{\partial t} &= \mathbf{\mathcal{A}}_u \, \mathbf{z}(\mathbf{x},t)
  + \mathbf{B}_u \, \mathbf{w}(\mathbf{x},t), \\
  \mathbf{y}_k &= \mathbf{C}_f \, \mathbf{f}(\mathbf{x}_k,t_k) + \boldsymbol{\epsilon}_k.
\end{split}
\end{equation}
In order to obtain a single augmented model, we can define
\begin{equation}
\begin{split}
  \mathbf{g} &= \begin{pmatrix}
	\mathbf{f} \\ \mathbf{u}
  \end{pmatrix}, \quad
  \mathbf{\mathcal{A}}
  = \begin{pmatrix}
	\mathbf{\mathcal{A}}_f & \mathbf{B}_f \, \mathbf{C}_u \\
	\mathbf{0} & \mathbf{\mathcal{A}}_u
  \end{pmatrix}, \quad
  \mathbf{M} = \begin{pmatrix} \mathbf{M}_f \\ \mathbf{0} \end{pmatrix} \\
  \mathbf{B}
  &= \begin{pmatrix}
	\mathbf{0} & \mathbf{B}_u
  \end{pmatrix}, \quad
  \mathbf{C}
  = \begin{pmatrix}
	\mathbf{C}_f & \mathbf{0}
  \end{pmatrix},
\end{split}
\end{equation}
which leads to a model of the form
\begin{equation}
\begin{split}
  \frac{\partial \mathbf{g}(\mathbf{x},t)}{\partial t}
  &= \mathbf{\mathcal{A}} \, \mathbf{g}(\mathbf{x},t)
  + \mathbf{B} \, \mathbf{w}(\mathbf{x},t) + \mathbf{M} \, \mathbf{c}(t), \\
  \mathbf{y}_k &= \mathbf{C} \, \mathbf{g}(\mathbf{x}_k,t_k) + \boldsymbol{\epsilon}_k.
\end{split}
\label{eq:ssaug2}
\end{equation}
The joint state-space representations \eqref{eq:ssaug} and \eqref{eq:ssaug2} of the LFMs now allows for full Bayesian inference in the models to be performed with Kalman filtering and smoothing methods \cite{Hartikainen+Sarkka:2011,Hartikainen+Seppanen+Sarkka:2012,Sarkka+Solin+Hartikainen:2013}. Furthermore, these representations also allow us to study control problems on LFMs which aim at designing controller functions $\mathbf{c}$. This problem is addressed in the next section.

\section{Stochastic Control of Gaussian Process Latent Force Models}
\label{sec:lfmcontrol}

In this section, we discuss the stochastic control problems related to latent force models. In particular, we provide and analyze the solutions for the linear quadratic regulation (LQR) problem for them.

\subsection{Controlled temporal LFMs} \label{sec:controlled_basic}
Let us consider the state-space model with a Gaussian process input \eqref{eq:ssgen}:
\begin{equation}
\begin{split}
  \frac{\mathrm{d}\mathbf{f}(t)}{\mathrm{d}t}
  &= \mathbf{A}_f \, \mathbf{f}(t) + \mathbf{B}_f \, \mathbf{u}(t) + \mathbf{M}_f \, \mathbf{c}(t).
\end{split}
\end{equation}
We will specifically aim to consider optimal control problems which minimize the quadratic cost functional
\begin{equation}
\begin{split}
  \mathcal{J}[\mathbf{c}] &= \frac{1}{2} \mathrm{E} \Big[
    \mathbf{f}^{\top}(T) \, \boldsymbol{\Phi} \, \mathbf{f}(T) \\
   &+ \int_0^T
   (\mathbf{f}^{\top}(t) \, \mathbf{X}(t) \, \mathbf{f}(t)
  + \mathbf{c}^{\top}(t) \, \mathbf{U}(t) \, \mathbf{c}(t)) \, \mathrm{d}t \Big],
\end{split}
\label{eq:quadcost1}
\end{equation}
where $\mathrm{E}[\cdot]$ denotes the expected value, $\boldsymbol{\Phi}$, $\mathbf{X}(t)$, and $\mathbf{U}(t)$ are positive semidefinite matrices for all $t \ge 0$, and $T$ is the target time, because they lead to computationally tractable control laws. However, the principle outlined here can also be extended to more general cost functionals although the numerical methods become order of magnitude more complicated.

A straightforward approach to optimal control with the quadratic cost \eqref{eq:quadcost1} is to use the separation principle of linear estimation and control which amounts to designing the optimal controller for the case $\mathbf{u}(t) = \mathbf{0}$ and use it in cascade with a Kalman filter. This indeed is the optimal solution in the case of white $\mathbf{u}(t)$, but not in our case.

The correct approach in this case, which also utilizes the learning outcome of the Gaussian process regression is to use the augmented state space model with the control signal. In this case it is given as (see \eqref{eq:ssaug})

\begin{equation}
\begin{split}
  \frac{\mathrm{d}\mathbf{g}(t)}{\mathrm{d}t}
  &= \mathbf{A} \, \mathbf{g}(t)
  + \mathbf{B} \, \mathbf{w}(t) + \mathbf{M} \, \mathbf{c}(t),
\end{split}
\end{equation}
with the measurement model given in \eqref{eq:ssaug} and the matrices $\mathbf{A}$ and $\mathbf{B}$ as defined in \eqref{eq:ssaugmats}. We now aim to design a controller for the above model by assuming a perfectly observed state and run it in cascade with a Kalman filter processing the measurements in the model. This yields to a controller which jointly learns the functions $\mathbf{f}$ and $\mathbf{u}$ and jointly optimizes the control with respect to the cost criterion \cite{Maybeck:1982b}. In this case the control cost function can be rewritten in form
\begin{equation}
\begin{split}
  \mathcal{J}[\mathbf{c}] &= \frac{1}{2} \mathrm{E} \Big[
    \mathbf{g}^{\top}(T) \, \boldsymbol{\Phi}_g \, \mathbf{g}(T) \\
   &+ \int_0^T
   (\mathbf{g}^{\top}(t) \, \mathbf{X}_g(t) \, \mathbf{g}(t)
  + \mathbf{c}^{\top}(t) \, \mathbf{U}(t) \, \mathbf{c}(t)) \, \mathrm{d}t \Big],
\end{split}
\end{equation}
where
\begin{equation}
\begin{split}
 \boldsymbol{\Phi}_g &= \begin{pmatrix}
   \boldsymbol{\Phi} & \mathbf{0} \\ \mathbf{0} & \mathbf{0}
 \end{pmatrix}, \qquad
 \mathbf{X}_g(t) = \begin{pmatrix}
   \mathbf{X}(t) & \mathbf{0} \\ \mathbf{0} & \mathbf{0}
 \end{pmatrix}.
\end{split}
\end{equation}
The design of the optimal linear quadratic controller for the resulting model can be done by using the classical Riccati-equation-based approaches \cite{Kalman:1960b,Anderson+Moore:2007}. Namely, the optimal control takes the form
\begin{equation}
\begin{split}
  \mathbf{c}(t) = -\mathbf{U}^{-1}(t) \, \mathbf{M}^{\top} \, \mathbf{P}(t) \, \hat{\mathbf{g}}(t),
\end{split}
\label{eq:joint_control1} 
\end{equation}
where $\hat{\mathbf{g}}(t)$ is the Kalman filter estimate of $\mathbf{g}(t)$ and the matrix $\mathbf{P}(t)$ solves the backward Riccati differential equation
\begin{equation}
\begin{split}
  \frac{\mathrm{d}\mathbf{P}(t)}{\mathrm{d}t} &=
    -\mathbf{A}^{\top} \, \mathbf{P}(t) - \mathbf{P}(t) \, \mathbf{A} \\
  &+  \mathbf{P}(t) \, \mathbf{M} \, \mathbf{U}^{-1}(t) \,
   \mathbf{M}^{\top} \, \mathbf{P}(t) - \mathbf{X}_g(t)
\end{split}
\label{eq:bgricc}
\end{equation}
with the boundary condition $\mathbf{P}(T) =  \boldsymbol{\Phi}_g$. However, we can write this solution for the LFM model in more explicit form which reveals its structure better. That is summarized in the following theorem. 

\begin{theorem} \label{the:control1} 
The control law in \eqref{eq:joint_control1} can be written as
\begin{equation} 
\begin{split}
  \mathbf{c}(t) = - \begin{pmatrix}
    \mathbf{U}^{-1} \, \mathbf{M}_f^{\top} \, \mathbf{P}_{f}(t) &
    \mathbf{U}^{-1} \, \mathbf{M}_f^{\top} \, \mathbf{P}_{12}(t)
  \end{pmatrix} \, \hat{\mathbf{g}}(t),
\end{split}
\end{equation}
where $\mathbf{P}_{f}(t) \triangleq \mathbf{P}_{11}(t)$ is the Riccati equation solution for the non-forced physical model. The full set of equations is
\begin{equation}
\begin{split}
  \frac{\mathrm{d}\mathbf{P}_{11}(t)}{\mathrm{d}t}
  &= -\mathbf{A}_f^{\top} \, \mathbf{P}_{11} - \mathbf{P}_{11} \, \mathbf{A}_f
  \\
  &\qquad
  + \mathbf{P}_{11} \, \mathbf{M}_f \, \mathbf{U}^{-1} \, \mathbf{M}_f^{\top} \, \mathbf{P}_{11} - \mathbf{X}(t), \\
  \frac{\mathrm{d}\mathbf{P}_{12}(t)}{\mathrm{d}t}
  &= -\mathbf{A}_f^{\top} \, \mathbf{P}_{12} - \mathbf{P}_{11} \, \mathbf{B}_f \, \mathbf{C}_u - \mathbf{P}_{12} \, \mathbf{A}_u
  \\
  &\qquad + \mathbf{P}_{11} \, \mathbf{M}_f \, \mathbf{U}^{-1} \, \mathbf{M}_f^{\top} \, \mathbf{P}_{12}, \\
  \frac{\mathrm{d}\mathbf{P}_{22}(t)}{\mathrm{d}t}
  &= -\mathbf{C}_u^{\top} \, \mathbf{B}_f^{\top} \, \mathbf{P}_{12} - \mathbf{A}_u^{\top} \, \mathbf{P}_{22}
  - \mathbf{P}_{21} \, \mathbf{B}_f \, \mathbf{C}_u \\
  &\qquad  - \mathbf{P}_{22} \, \mathbf{A}_u
  + \mathbf{P}_{12}^{\top} \, \mathbf{M}_f \, \mathbf{U}^{-1} \, \mathbf{M}_f^{\top} \, \mathbf{P}_{12}.
\end{split}
\end{equation}
\end{theorem}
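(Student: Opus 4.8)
The plan is to exploit the block structure of the augmented matrices in \eqref{eq:ssaugmats} together with the symmetry of the Riccati solution $\mathbf{P}(t)$, so that the single matrix equation \eqref{eq:bgricc} decouples into equations for its blocks. First I would partition $\mathbf{P}(t)$ conformally with $\mathbf{g} = (\mathbf{f}^{\top}, \mathbf{u}^{\top})^{\top}$,
\begin{equation*}
  \mathbf{P}(t) = \begin{pmatrix}
    \mathbf{P}_{11}(t) & \mathbf{P}_{12}(t) \\
    \mathbf{P}_{12}^{\top}(t) & \mathbf{P}_{22}(t)
  \end{pmatrix},
\end{equation*}
where symmetry of the flow (inherited from the symmetric terminal condition $\boldsymbol{\Phi}_g$ and the symmetry-preserving right-hand side of \eqref{eq:bgricc}) guarantees $\mathbf{P}_{21} = \mathbf{P}_{12}^{\top}$. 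The control-law claim is then almost immediate: since $\mathbf{M} = (\mathbf{M}_f^{\top}, \mathbf{0})^{\top}$ has a vanishing lower block, the product $\mathbf{M}^{\top} \mathbf{P}(t) = \begin{pmatrix} \mathbf{M}_f^{\top} \mathbf{P}_{11}(t) & \mathbf{M}_f^{\top} \mathbf{P}_{12}(t) \end{pmatrix}$ selects only the first block row, and substituting this into \eqref{eq:joint_control1} yields exactly the stated partitioned gain.

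The substantive step is to insert the partitioned $\mathbf{P}(t)$, $\mathbf{A}$, $\mathbf{M}$, and $\mathbf{X}_g$ into \eqref{eq:bgricc} and equate blocks. I would expand the three products $\mathbf{A}^{\top}\mathbf{P}$, $\mathbf{P}\mathbf{A}$, and $\mathbf{P}\mathbf{M}\mathbf{U}^{-1}\mathbf{M}^{\top}\mathbf{P}$ entry by entry, using the lower-triangular form $\mathbf{A}^{\top} = \bigl(\begin{smallmatrix} \mathbf{A}_f^{\top} & \mathbf{0} \\ \mathbf{C}_u^{\top}\mathbf{B}_f^{\top} & \mathbf{A}_u^{\top} \end{smallmatrix}\bigr)$. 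The zero lower block of $\mathbf{M}$ is what makes the computation collapse: $\mathbf{P}\mathbf{M} = \begin{pmatrix} \mathbf{P}_{11}\mathbf{M}_f \\ \mathbf{P}_{12}^{\top}\mathbf{M}_f \end{pmatrix}$, so the quadratic term involves only $\mathbf{P}_{11}$ and $\mathbf{P}_{12}$, and the source term $\mathbf{X}_g$ contributes only to the $(1,1)$ block. Reading off the $(1,1)$, $(1,2)$, and $(2,2)$ entries then reproduces the three coupled differential equations in the statement, with terminal data $\mathbf{P}_{11}(T)=\boldsymbol{\Phi}$, $\mathbf{P}_{12}(T)=\mathbf{0}$, $\mathbf{P}_{22}(T)=\mathbf{0}$ coming from $\boldsymbol{\Phi}_g$; the $(2,1)$ block requires no separate treatment, being the transpose of the $(1,2)$ equation.

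The structural observation I would emphasize at the end is that the $(1,1)$ equation is \emph{self-contained}: it involves $\mathbf{P}_{11}$ alone and coincides with the backward Riccati equation for the reduced, unforced physical system with data $(\mathbf{A}_f, \mathbf{M}_f, \mathbf{X}, \boldsymbol{\Phi})$. By uniqueness of its solution this forces $\mathbf{P}_{11}(t) = \mathbf{P}_f(t)$, justifying the identification $\mathbf{P}_f \triangleq \mathbf{P}_{11}$; the $(1,2)$ equation is then a linear time-varying ODE in $\mathbf{P}_{12}$ once $\mathbf{P}_{11}$ is available, and $\mathbf{P}_{22}$ is driven by $\mathbf{P}_{11}$ and $\mathbf{P}_{12}$. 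I do not anticipate a genuine obstacle here, since the argument is a block computation; the only point demanding care is the consistent bookkeeping of symmetry ($\mathbf{P}_{21}=\mathbf{P}_{12}^{\top}$), which ensures the cross terms and the quadratic term in the $(2,2)$ block are grouped precisely as written in the statement.
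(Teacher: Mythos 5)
Your proposal is correct and follows exactly the route the paper takes: substituting the conformally partitioned $\mathbf{P}(t)$ into \eqref{eq:bgricc} and reading off the $(1,1)$, $(1,2)$, and $(2,2)$ blocks, with the control law following from the vanishing lower block of $\mathbf{M}$. The paper's proof is a one-line statement of this substitution, so your write-up simply supplies the details (block products, symmetry $\mathbf{P}_{21}=\mathbf{P}_{12}^{\top}$, terminal conditions, and the uniqueness argument identifying $\mathbf{P}_{11}$ with the non-forced Riccati solution), all of which check out.
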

\begin{proof}
The result can be obtained by inserting the partitioned $\mathbf{P} = \begin{pmatrix} \mathbf{P}_{11} & \mathbf{P}_{12} \\ \mathbf{P}_{12}^{\top} & \mathbf{P}_{22} \end{pmatrix}$ into \eqref{eq:bgricc}.
\end{proof}
In the above theorem the gain for the physical system (i.e. $\mathbf{f}$) portion of the state is exactly the same as in the optimal controller without an input. However, the second part of gain is non-zero and uses the input states for control feedback as well.

In the next section we will simplify the control problem even more, and consider the limit $T \to \infty$, because it leads to a particularly convenient class of linear controllers which are computationally tractable while still being able to use the learning outcome of the Gaussian process inference.

\subsection{Linear quadratic regulation of temporal LFMs}
In the LFM case, namely because we have restricted our consideration to time-invariant models, a very convenient type of control problem is the infinite-time linear regulation problem which corresponds to the cost function
\begin{equation}
\begin{split}
  \mathcal{J}[\mathbf{c}] &= \int_0^\infty
   (\mathbf{f}^{\top}(t) \, \mathbf{X} \, \mathbf{f}(t)
  + \mathbf{c}^{\top}(t) \, \mathbf{U} \, \mathbf{c}(t)) \, \mathrm{d}t \Big],
\end{split}
\end{equation}
where $\mathbf{X}$ and $\mathbf{U}$ are constant semidefinite matrices. By rewriting the model as an augmented state-space model as we did in the previous section and by following the classical results, the controller becomes
\begin{equation}
\begin{split}
  \mathbf{c}(t) = - \mathbf{U}^{-1} \, \mathbf{M}^{\top} \, \mathbf{P}\, \hat{\mathbf{g}}(t),
\end{split}
\label{eq:joint_control}
\end{equation}
where the matrix $\mathbf{P}$ is the solution to the algebraic Riccati equation (ARE) 
\begin{equation}
\begin{split}
  \mathbf{0} &=
    -\mathbf{A}^{\top} \, \mathbf{P} - \mathbf{P} \, \mathbf{A} 
  +  \mathbf{P} \, \mathbf{M} \, \mathbf{U}^{-1} \,
   \mathbf{M}^{\top} \, \mathbf{P} - \mathbf{X}_g,
\end{split}
\label{eq:joint_are}
\end{equation}
where 
 $\mathbf{X}_g = \begin{pmatrix}
   \mathbf{X} & \mathbf{0} \\ \mathbf{0} & \mathbf{0}
 \end{pmatrix}$.

By solving the control law from these equations, we get a controller which is function of both the estimate of the function $\mathbf{f}$ and estimate of the input $\mathbf{u}$. Thus this control law is able to utilize both the estimate of the function as well the learned input function.

It is also possible to express the solution to the LFM control problem above in terms of the corresponding control solution to the non-forced problem similarly to the time-varying case considered in the previous section. This result is summarized in the following theorem. 

\begin{theorem} \label{the:control} 
The control law in \eqref{eq:joint_control} can now be written as
\begin{equation} 
\begin{split}
  \mathbf{c}(t) = - \begin{pmatrix}
    \mathbf{U}^{-1} \, \mathbf{M}_f^{\top} \, \mathbf{P}_{f} &
    \mathbf{U}^{-1} \, \mathbf{M}_f^{\top} \, \mathbf{P}_{12}
  \end{pmatrix} \, \hat{\mathbf{g}}(t),
\end{split}
\end{equation}
where $\mathbf{U}^{-1} \, \mathbf{M}_f^{\top} \, \mathbf{P}_{f}$ is just the non-forced-case gain and $\mathbf{P}_{12}$ can be solved from the Sylvester equation
\begin{equation}
\begin{split}
  \left( \mathbf{P}_{f} \, \mathbf{M}_f \, \mathbf{U}^{-1} \, \mathbf{M}_f^{\top} 
         - \mathbf{A}_f^{\top} \right) \, \mathbf{P}_{12}  - \mathbf{P}_{12} \, \mathbf{A}_u
    = \mathbf{P}_{f} \, \mathbf{B}_f \, \mathbf{C}_u.
\end{split}
\end{equation}
\end{theorem}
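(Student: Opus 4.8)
The plan is to follow exactly the block-substitution strategy used in the proof of Theorem~\ref{the:control1}, specialized to the stationary (infinite-horizon) case by setting the time derivative of $\mathbf{P}$ to zero. First I would record the structure of the gain. Writing $\mathbf{P} = \begin{pmatrix} \mathbf{P}_{11} & \mathbf{P}_{12} \\ \mathbf{P}_{12}^{\top} & \mathbf{P}_{22} \end{pmatrix}$ and using $\mathbf{M} = (\mathbf{M}_f^{\top} \ \ \mathbf{0})^{\top}$, a direct multiplication gives $\mathbf{M}^{\top} \mathbf{P} = (\mathbf{M}_f^{\top} \mathbf{P}_{11} \ \ \mathbf{M}_f^{\top} \mathbf{P}_{12})$, so that the control law $\mathbf{c}(t) = -\mathbf{U}^{-1} \mathbf{M}^{\top} \mathbf{P} \, \hat{\mathbf{g}}(t)$ immediately takes the claimed partitioned form with $\mathbf{P}_f \triangleq \mathbf{P}_{11}$. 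This step is purely mechanical and carries no difficulty.

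The substance lies in identifying the two blocks. I would insert the partitioned $\mathbf{P}$, together with $\mathbf{A} = \begin{pmatrix} \mathbf{A}_f & \mathbf{B}_f \mathbf{C}_u \\ \mathbf{0} & \mathbf{A}_u \end{pmatrix}$ and $\mathbf{X}_g$, into the algebraic Riccati equation~\eqref{eq:joint_are} and read off the $(1,1)$ and $(1,2)$ blocks. The crucial structural observation is that the quadratic term factors as $\mathbf{M}\mathbf{U}^{-1}\mathbf{M}^{\top} = \begin{pmatrix} \mathbf{M}_f\mathbf{U}^{-1}\mathbf{M}_f^{\top} & \mathbf{0} \\ \mathbf{0} & \mathbf{0} \end{pmatrix}$, whose lower-right block vanishes; consequently the $(1,1)$ block of~\eqref{eq:joint_are} involves only $\mathbf{P}_{11}$ and reads $\mathbf{0} = -\mathbf{A}_f^{\top}\mathbf{P}_{11} - \mathbf{P}_{11}\mathbf{A}_f + \mathbf{P}_{11}\mathbf{M}_f\mathbf{U}^{-1}\mathbf{M}_f^{\top}\mathbf{P}_{11} - \mathbf{X}$. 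This is precisely the ARE of the non-forced physical model, which identifies $\mathbf{P}_{11}=\mathbf{P}_f$. Substituting $\mathbf{P}_{11}=\mathbf{P}_f$ into the $(1,2)$ block, collecting the terms that pre- and post-multiply $\mathbf{P}_{12}$, and moving the driving term to the right-hand side then yields exactly the asserted Sylvester equation; this is the stationary analogue of the $\mathrm{d}\mathbf{P}_{12}/\mathrm{d}t$ equation of Theorem~\ref{the:control1} with the derivative set to zero.

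The one point that requires a genuine argument, rather than bookkeeping, is that the Sylvester equation actually determines $\mathbf{P}_{12}$. Here I would note that the left coefficient satisfies $\mathbf{P}_f\mathbf{M}_f\mathbf{U}^{-1}\mathbf{M}_f^{\top} - \mathbf{A}_f^{\top} = -(\mathbf{A}_f - \mathbf{M}_f\mathbf{U}^{-1}\mathbf{M}_f^{\top}\mathbf{P}_f)^{\top}$, i.e.\ the negative transpose of the LQR closed-loop matrix. Since $\mathbf{P}_f$ is the stabilizing solution of the non-forced ARE, this closed-loop matrix is Hurwitz, so the left coefficient is anti-stable (spectrum in the open right half-plane), whereas $\mathbf{A}_u$, being the dynamics of a stationary Gaussian-process prior, is Hurwitz. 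The two spectra are therefore disjoint, and by the standard solvability criterion for Sylvester equations a unique $\mathbf{P}_{12}$ exists. I expect this spectral-separation step to be the only place where the specific structure of the LFM (a stabilizing LQR solution together with a stable input model) is really used; everything else reduces to the block algebra already exhibited in Theorem~\ref{the:control1}.
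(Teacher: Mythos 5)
Your proposal is correct and follows essentially the same route as the paper, which simply sets the time derivatives in Theorem~\ref{the:control1} to zero to obtain the stationary block equations; your block-by-block reading of the algebraic Riccati equation \eqref{eq:joint_are} is the detailed version of that one-line argument. The one genuine addition is your spectral-separation argument (the closed-loop matrix $\mathbf{A}_f - \mathbf{M}_f \mathbf{U}^{-1} \mathbf{M}_f^{\top} \mathbf{P}_f$ is Hurwitz while $\mathbf{A}_u$ is Hurwitz, so the Sylvester equation has a unique solution), which the paper leaves implicit even though it is needed to justify the claim that $\mathbf{P}_{12}$ ``can be solved'' from that equation.
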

\begin{proof}
The result can be obtained by setting the time derivatives in Theorem~\ref{the:control1} to zero.\end{proof}

Note that although the system is stabilizable also by setting the second term to zero, that is, using the non-forced gain (cf. Theorem~\ref{the:stab}), a better solution than that is obtained by using the control in Theorem~\ref{the:control} which depends on the input as well.

\subsection{Controlled spatio-temporal LFMs}
In the case of PDE LFMs we get models of the form
\begin{equation}
\begin{split}
  \frac{\partial \mathbf{g}(\mathbf{x},t)}{\partial t}
  &= \mathbf{\mathcal{A}} \, \mathbf{g}(\mathbf{x},t)
  + \mathbf{B} \, \mathbf{w}(\mathbf{x},t) + \mathbf{M}_f \, \mathbf{c}(\mathbf{x},t), \\
  \mathbf{y}_k &= \mathbf{C} \, \mathbf{g}(t_k) + \boldsymbol{\epsilon}_k,
\end{split}
\end{equation}
where the control problem corresponds to designing the control function $\mathbf{c}(\mathbf{x},t)$ minimizing, for example, a linear quadratic cost functional. In principle, it is possible to directly analyze such infinite-dimensional control problems which leads to, for example, generalizations of the controllability concepts \cite{Curtain:2012}. However, in practice, after setting up the model, we replace the infinite-dimensional model with its finite-dimensional approximation. Therefore it is actually more fruitful to directly analyze the finite-dimensional approximation rather than the original infinite-dimensional model---this way we can also easily account for the effect of discretization. For the finite-dimensional approximate model the results in the previous and next sections apply as such.

\section{Observability and Controllability}
\label{sec:lfmtheory}

In this section, our aim is to discuss the detectability and observability of the latent force models along with the stabilizability and controllability of them. We only consider finite-dimensional models, because as discussed above, infinite-dimensional models anyway need to be discretized and in order to ensure the detectability and observability of the resulting models, the finite-dimensional results are sufficient. The corresponding pure infinite-dimensional results could be derived using the results in \cite{Curtain:2012}.

\subsection{Detectability and observability of latent force models}

Let us now consider the detectability and observability of LFMs. We assume that we have a latent force model which has the following state space representation

\begin{equation}
\begin{split}
  \frac{\mathrm{d}\mathbf{g}(t)}{\mathrm{d}t}
  &= \mathbf{A} \, \mathbf{g}(t)
  + \mathbf{B} \, \mathbf{w}(t), \\
  \mathbf{y}_k &= \mathbf{C} \, \mathbf{g}(t_k) + \boldsymbol{\epsilon_k},
\end{split}
\label{eq:detss}
\end{equation}
where $\mathbf{g}$ and the matrices $\mathbf{A}$, $\mathbf{B}$, and $\mathbf{C}$ are defined as in \eqref{eq:ssaugmats}. In this representation we have dropped the control signal, because it does not affect the detectability and observability.

It is also reasonable to assume that the state-space representation of the latent force model is stable and hence detectable. However, the physical system part itself often is not stable. We need to assume though that it is at least detectable and preferably it should be observable. The most useful case occurs when the whole joint system is observable. The sampling procedure also affects the observability and we need to ensure that we do not get 'aliasing' kind of phenomenon analogously to sampling a signal with a sampling frequency that is below the Nyquist frequency. Let us start with the following result for detectability.

\begin{lemma} \label{lem:detect}
  Assume that we have a latent force model which has the state space representation given in \eqref{eq:detss}.
Assume that $(\exp(\mathbf{A}_f \, \Delta t_k),\mathbf{C}_f)$ is detectable, and that the input function $u(t)$ has an exponentially stable state space representation. Then the full system is \emph{detectable} and the Kalman filter for the model is exponentially stable.
\end{lemma}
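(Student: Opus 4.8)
The plan is to treat the statement as a discrete-time detectability problem for the \emph{sampled} model and then invoke the standard link between detectability and filter stability. First I would observe that the Kalman filter processes measurements only at the sampling instants $t_k$, so the relevant object is the sampled system with transition matrix $\mathbf{F}_k = \exp(\mathbf{A}\,\Delta t_k)$ and output matrix $\mathbf{C}$, where $\Delta t_k = t_k - t_{k-1}$ and $\mathbf{A},\mathbf{C}$ are as in \eqref{eq:ssaugmats}. Because $\mathbf{A}$ is block upper triangular, its exponential inherits the same structure,
\begin{equation}
  \mathbf{F}_k = \begin{pmatrix}
    \exp(\mathbf{A}_f\,\Delta t_k) & \mathbf{G}_k \\
    \mathbf{0} & \exp(\mathbf{A}_u\,\Delta t_k)
  \end{pmatrix},
\end{equation}
for some coupling block $\mathbf{G}_k$, while $\mathbf{C} = (\mathbf{C}_f \;\; \mathbf{0})$ discards the second (input) component. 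This is also where the sampled rather than continuous-time hypothesis matters: stating detectability for $\exp(\mathbf{A}_f\,\Delta t_k)$ rules out the aliasing situation in which two continuous modes collapse onto a single discrete eigenvalue.

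The key step is to establish detectability of the pair $(\mathbf{F}_k,\mathbf{C})$ via the discrete-time Popov--Belevitch--Hautus test: the pair is detectable iff the only solution of $(\mathbf{F}_k - \lambda\mathbf{I})\mathbf{v} = \mathbf{0}$, $\mathbf{C}\mathbf{v} = \mathbf{0}$ with $|\lambda|\ge 1$ is $\mathbf{v} = \mathbf{0}$. Writing $\mathbf{v} = (\mathbf{v}_1,\mathbf{v}_2)$ and using the block structure, the bottom block gives $(\exp(\mathbf{A}_u\,\Delta t_k) - \lambda\mathbf{I})\mathbf{v}_2 = \mathbf{0}$. Since $u(t)$ admits an exponentially stable representation, $\mathbf{A}_u$ is Hurwitz and $\exp(\mathbf{A}_u\,\Delta t_k)$ has spectral radius strictly less than one; hence no $\lambda$ with $|\lambda|\ge 1$ is an eigenvalue of the bottom block, forcing $\mathbf{v}_2 = \mathbf{0}$. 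The remaining equations collapse to $(\exp(\mathbf{A}_f\,\Delta t_k) - \lambda\mathbf{I})\mathbf{v}_1 = \mathbf{0}$ and $\mathbf{C}_f\mathbf{v}_1 = \mathbf{0}$, and the assumed detectability of $(\exp(\mathbf{A}_f\,\Delta t_k),\mathbf{C}_f)$ forces $\mathbf{v}_1 = \mathbf{0}$. Thus the full sampled pair is detectable; intuitively, the only modes hidden from $\mathbf{C}$ are the input modes, which are already stable and so cannot threaten detectability.

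Finally, to obtain exponential stability of the Kalman filter I would invoke the classical result that detectability of $(\mathbf{F}_k,\mathbf{C})$ makes the optimal filter error dynamics, governed by the closed-loop matrix $(\mathbf{I} - \mathbf{K}_k\mathbf{C})\mathbf{F}_k$, exponentially stable \cite{Anderson+Moore:2007}, since the optimal gain is at least as stabilizing as the output-injection gain whose existence detectability guarantees. I expect the only real obstacle to be bookkeeping rather than algebra: because the sampling intervals $\Delta t_k$ may vary, the honest statement requires \emph{uniform} detectability together with uniform bounds on $\mathbf{F}_k$ and on the discretized process-noise covariances, so that one may conclude uniform exponential stability of the genuinely time-varying filter rather than of a frozen-time surrogate. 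The PBH computation above is short and structural; translating the per-step detectability hypothesis into this uniform form is the delicate part. It is worth noting that stabilizability from the process noise is \emph{not} needed for filter stability here---it would only be required to guarantee uniqueness and positivity of the limiting error covariance.
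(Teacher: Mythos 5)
Your proposal is correct and rests on the same structural observation as the paper's proof: after sampling, the transition matrix is block upper triangular, the output matrix sees only the physical states, and the unobserved input modes are already exponentially stable, so detectability of the physical pair is all that is needed. Where you differ is in the certificate used for the middle step. The paper works directly with the output-injection characterization of detectability: it writes down the closed-loop error recursion with a gain sequence $\mathbf{G}_k$ partitioned conformally, sets the gain acting on the input block to zero, argues that the exponentially decaying $\tilde{\mathbf{u}}_k$ forcing term cannot destabilize the $\tilde{\mathbf{f}}_k$ recursion, and then invokes the assumed detectability of $(\exp(\mathbf{A}_f\,\Delta t_k),\mathbf{C}_f)$ to stabilize what remains. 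You instead apply the discrete-time PBH eigenvector test to the frozen-time pair $(\mathbf{F}_k,\mathbf{C})$ and let the spectral radius bound on $\exp(\mathbf{A}_u\,\Delta t_k)$ kill the bottom block. The PBH route is shorter and more transparent for a fixed sampling interval, but it is inherently a time-invariant test, and you correctly identify that the nonuniform $\Delta t_k$ case requires upgrading to uniform detectability with uniform bounds on $\mathbf{F}_k$ and the process-noise covariances; the paper's gain-sequence construction is the time-varying-friendly formulation and sidesteps that translation, although it too leans on boundedness of the coupling blocks $\Gamma_k$ when dismissing the cascade term. Your closing remark that stabilizability from the noise is not needed for filter stability, only for uniqueness and positivity of the limiting covariance, is accurate and is a point the paper leaves implicit.
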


\begin{proof}
We first discretize the system at arbitrary time points. The discretized system has the form (see, e.g., \cite{Bar-Shalom+Li+Kirubarajan:2001})
\begin{equation}
\begin{split}
   \mathbf{g}_k &= \exp(\mathbf{A}_f \, \Delta t_k) \, \mathbf{g}_{k-1} +  \mathbf{q}_k, \\
  \mathbf{y}_k &=  \mathbf{C} \,  \mathbf{g}_k + \boldsymbol{\epsilon_k},
\end{split}
\end{equation}
where $\mathbf{q}_k$ is a Gaussian random variable, which will be detectable provided that there exists a bounded gain sequence $\mathbf{G}_k$ such that the sequence $\tilde{\mathbf{g}}_k$ defined as $\tilde{\mathbf{g}}_k = (\exp(\mathbf{A}_f \, \Delta t_k) - \mathbf{G}_k \, \mathbf{C}) \, \tilde{\mathbf{g}}_{k-1}$ is exponentially stable \cite{Anderson:1981}. More explicitly, the following system for the sequences $\tilde{\mathbf{f}}_k$ and $\tilde{\mathbf{u}}_k$ needs to be exponentially stable with some choice of sequence $\mathbf{G}_k$:
\begin{equation}
\begin{split}
  \tilde{\mathbf{f}}_k &= \exp(\mathbf{A}_f \, \Delta t_k) \, \tilde{\mathbf{f}}_{k-1}
  + \Gamma_k \, \tilde{\mathbf{u}}_{k-1} - \mathbf{G}_k \, \mathbf{C}_f \, \tilde{\mathbf{f}}_{k-1}, \\
  \tilde{\mathbf{u}}_k &= \exp(\mathbf{A}_u \, \Delta t_k) \, \tilde{\mathbf{u}}_{k-1}.
\end{split}
\end{equation}
As the process $\mathbf{u}_k$ is exponentially stable, the sequence $\tilde{\mathbf{u}}_k$ is exponentially decreasing and bounded. Hence it does not affect the stability of the first equation. Therefore, the full system will be detectable provided that there exists a gain sequence $K_k$ such that $\tilde{\mathbf{f}}_k = (\exp(\mathbf{A}_f \, \Delta t_k) - \mathbf{G}_k \, \mathbf{C}_f) \, \tilde{\mathbf{f}}_{k-1}$ is exponentially stable. The gain sequence exists, because $(\exp(\mathbf{A}_f \, \Delta t_k),\mathbf{C}_f)$ is detectable by assumption. 
\end{proof}

Above, in Lemma~\ref{lem:detect} we had to assume the detectability of the discretized system. There are many ways to assure this, but one way is to demand that the continuous physical model is observable and that we are not sampling critically \cite{Ding:2009}, that is, in a way that would lead to aliasing of frequencies as in the Shannon-Nyquist theory. Although observability is a quite strong condition compared to detectability, it assures that we have the chance to reconstruct the physical system with an arbitrary precision by improving the measurement protocol, which would not be true for mere detectability. %

If we assume that the physical system part is observable and the sampling is not critical, we get the following detectability theorem. Note that we do not yet assume that the latent force model part would be observable although its stability already implies that it is detectable.
\begin{theorem} \label{the:detect}
Assume that $(\mathbf{A}_f,\mathbf{C}_f)$ is observable, the physical system is not critically sampled, and that the latent force model part is stable. Then the full system is detectable and the Kalman filter for the model is exponentially stable.
\end{theorem}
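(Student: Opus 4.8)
The plan is to reduce Theorem~\ref{the:detect} to Lemma~\ref{lem:detect} by showing that the stated continuous-time hypotheses imply the discrete-time detectability hypothesis of that lemma. Lemma~\ref{lem:detect} already establishes detectability of the full discretized system, and exponential stability of the Kalman filter, once we know that $(\exp(\mathbf{A}_f \, \Delta t_k),\mathbf{C}_f)$ is detectable and that the input admits an exponentially stable realization. The latter is exactly the assumption that the latent force model part is stable, so the only gap to fill is deriving the discrete-time detectability of the physical part from its continuous-time observability together with the non-critical-sampling condition.

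First I would invoke the classical sampling theory of linear systems: if $(\mathbf{A}_f,\mathbf{C}_f)$ is observable, then the sampled pair $(\exp(\mathbf{A}_f \, \Delta t),\mathbf{C}_f)$ remains observable provided the sampling is non-pathological, i.e.\ no two eigenvalues $\lambda_a,\lambda_b$ of $\mathbf{A}_f$ with $\mathrm{Re}(\lambda_a)=\mathrm{Re}(\lambda_b)$ satisfy $\mathrm{Im}(\lambda_a-\lambda_b) = 2\pi \, m / \Delta t$ for a nonzero integer $m$ \cite{Ding:2009}. This is precisely the non-critical (non-aliasing, Shannon--Nyquist) sampling condition assumed in the theorem, whose role is to prevent distinct continuous-time modes from collapsing to a common discrete-time eigenvalue and thereby becoming indistinguishable in the output. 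Since observability is strictly stronger than detectability, it then follows immediately that each discretized pair $(\exp(\mathbf{A}_f \, \Delta t_k),\mathbf{C}_f)$ is detectable.

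With the discrete-time physical pair detectable for every admissible sampling interval and the input realization exponentially stable, all hypotheses of Lemma~\ref{lem:detect} are met, and the conclusion---detectability of the full system together with exponential stability of the associated Kalman filter---follows directly. The step I expect to be the main obstacle is securing the \emph{uniform} (in $k$) character of the detectability that Lemma~\ref{lem:detect} actually requires: its proof asks for a bounded gain sequence $\mathbf{G}_k$ rendering $\exp(\mathbf{A}_f \, \Delta t_k) - \mathbf{G}_k \, \mathbf{C}_f$ exponentially stable uniformly over $k$, whereas the sampling argument above yields observability only pointwise for each fixed $\Delta t_k$. To close this gap I would additionally assume the sampling intervals are bounded above and below and stay uniformly bounded away from the critical values, so that the discrete-time observability Gramians are uniformly bounded and uniformly positive definite; standard time-varying detectability arguments then produce the required uniformly bounded stabilizing gain sequence.
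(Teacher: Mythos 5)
Your proposal follows essentially the same route as the paper: invoke the result of \cite{Ding:2009} that continuous-time observability plus non-critical sampling yields discrete-time observability, note that observability implies detectability, and then apply Lemma~\ref{lem:detect}. Your additional remarks about securing uniformity of the stabilizing gain sequence over $k$ are a legitimate refinement of a point the paper's one-line proof glosses over, but they do not change the argument.
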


\begin{proof}
According to \cite{Ding:2009}, the observability of the continuous-time system together with the non-critical sampling ensures that the discrete-time system is also observable. As discrete-time observability implies discrete-time detectability the result follows from Lemma~\ref{lem:detect}.
\end{proof}

Let us now consider the conditions for the observability of the full system. It turns out that in general, the best way to determine the observability of the joint system is not to attempt to think of the physical system and the latent force model separately, but explicitly consider the joint state-space model. There are numerous attempts to map the properties of this kind cascaded systems to the properties of the joint system (e.g. \cite{Gilbert:1963,Chen:1967,Davison:1975}), but still the best way to go seems to be simply to use a standard observability tests on the joint system. The properties of the sub-systems of this kind of cascade do not alone determine the observability, because we can have phenomena like zero-pole cancellation which leads to a non-observable system even when all the subsystems are observable (see, e.g., \cite{Gilbert:1963}). When we also account for the effect of sampling to observability, we get the following theorem.

\begin{theorem} \label{the:obsv}
Assume that the continuous-time joint system $(\mathbf{A},\mathbf{C})$ is observable, and the
observations are not critically sampled, then the discrete-time full system is observable.
\end{theorem}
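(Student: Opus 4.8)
The plan is to reduce the claim to the classical result on preservation of observability under sampling, which is exactly the mechanism invoked in the proof of Theorem~\ref{the:detect}, only now applied to the full pair $(\mathbf{A},\mathbf{C})$ rather than to the physical subpair $(\mathbf{A}_f,\mathbf{C}_f)$. First I would discretize the continuous-time dynamics in \eqref{eq:detss} at the measurement instants to obtain a discrete-time system with transition matrix $\boldsymbol{\Phi}_k = \exp(\mathbf{A}\,\Delta t_k)$ and the same output matrix $\mathbf{C}$, so that the remaining question is whether the sampled pair $(\boldsymbol{\Phi}_k,\mathbf{C})$ is observable. Note that the hypothesis that $(\mathbf{A},\mathbf{C})$ is \emph{jointly} observable sidesteps the cascade subtlety discussed just before the theorem (pole--zero cancellation between the physical and latent parts), so the only content left to establish is that observability survives the exponential sampling map.

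The central tool I would use is the Popov--Belevitch--Hautus eigenvalue test together with the spectral mapping theorem. Recall that $(\mathbf{A},\mathbf{C})$ is observable if and only if no eigenvector of $\mathbf{A}$ lies in the kernel of $\mathbf{C}$, and that the eigenvalues of $\exp(\mathbf{A}\,\Delta t)$ are $\exp(\lambda_i\,\Delta t)$, where the $\lambda_i$ are the eigenvalues of $\mathbf{A}$, sharing the same (generalized) eigenvectors. Observability of the continuous-time pair therefore transfers to the sampled pair unless the exponential map identifies two distinct eigenvalues, i.e. unless there exist $\lambda_p\neq\lambda_q$ with $\exp(\lambda_p\,\Delta t)=\exp(\lambda_q\,\Delta t)$, equivalently $\lambda_p-\lambda_q = 2\pi m\,\mathrm{i}/\Delta t$ for some nonzero integer $m$. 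In that degenerate case the shared eigenspace of $\boldsymbol{\Phi}_k$ is a direct sum, and a suitable linear combination of $v_p$ and $v_q$ can fall into $\ker\mathbf{C}$ even though neither vector does individually; this spectral collapse is exactly the aliasing phenomenon, and ``not critically sampled'' is precisely the hypothesis that it does not occur. Ruling it out keeps the eigenspaces separated, so the PBH condition satisfied by $(\mathbf{A},\mathbf{C})$ continues to hold for $(\boldsymbol{\Phi}_k,\mathbf{C})$, which is the statement we cite from \cite{Ding:2009}.

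The step I expect to require the most care is matching the informal ``non-critical sampling'' condition to the exact no-aliasing inequality above and handling the possibly non-uniform spacing $\Delta t_k$. For uniform sampling the classical theorem applies verbatim; for non-uniform spacing I would either restrict the argument to a uniform sub-grid of measurement times that already avoids aliasing, or argue directly that the stacked observability matrix formed from the products of the $\boldsymbol{\Phi}_k$ retains full column rank, since each factor preserves the relevant eigenstructure. Once the no-aliasing condition is secured the conclusion that the discrete-time full system is observable is immediate, and the remaining verification is routine.
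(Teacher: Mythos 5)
Your argument is correct and is essentially the paper's own proof, which consists entirely of the citation ``See \cite{Ding:2009}'': you have simply reconstructed the classical Kalman--Ho--Narendra-style result behind that reference, namely that the PBH eigenvector condition survives the map $\mathbf{A}\mapsto\exp(\mathbf{A}\,\Delta t)$ unless two eigenvalues alias via $\lambda_p-\lambda_q = 2\pi m\,\mathrm{i}/\Delta t$, which is exactly what ``not critically sampled'' excludes. No gap; your extra care about non-uniform $\Delta t_k$ goes slightly beyond what the paper bothers to address.
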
 

\begin{proof}
See \cite{Ding:2009}.
\end{proof}
In practical terms it is thus easiest to use, for example, the classical rank-condition (see, e.g., \cite{Ogata:1997}) which says that the (joint) system $(\mathbf{A},\mathbf{C})$ is observable, which in time-invariant case is ensured provided that the following matrix has full rank for some $m$:
\begin{equation}
  \mathcal{O} = 
  \begin{pmatrix}
    \mathbf{C} \\
    \mathbf{C} \, \mathbf{A} \\
    \vdots \\
    \mathbf{C} \, \mathbf{A}^{m-1}    
  \end{pmatrix},
\end{equation}
and then ensure that sampling is non-critical \cite{Ding:2009}. Fortunately, the continuous-time joint system will be observable in many practical scenarios provided that we do not have any zero-pole cancellations between the physical system and force model.

\subsection{Stabilizability and non-controllability of LFMs}
The aim is now to discuss the controllability and stabilizability of state-space latent force models. We assume that the model has the form
\begin{equation}
\begin{split}
  \frac{\mathrm{d}\mathbf{g}(t)}{\mathrm{d}t}
  &= \mathbf{A} \, \mathbf{g}(t)
  + \mathbf{B} \, \mathbf{w}(t) + \mathbf{M} \, \mathbf{c}(t),
\end{split}
\end{equation}
where $\mathbf{g}$ and the matrices $\mathbf{A}$, $\mathbf{B}$, and $\mathbf{M}$ are defined in \eqref{eq:ssaugmats}. %

First of all, the stabilizability of the system is guaranteed solely by ensuring that the physical model part is stabilizable, provided that the state-space representation of the stationary GP is constructed such that it is exponentially stable. Thus we have the following theorem.

\begin{theorem} \label{the:stab}
Assume that $(\mathbf{A}_f,\mathbf{M}_f)$ is stabilizable and the latent force has an exponentially stable state space representation. Then the full system is \emph{stabilizable}.
\end{theorem}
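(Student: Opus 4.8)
The plan is to prove stabilizability by exhibiting an explicit stabilizing state-feedback gain, exploiting the block-triangular structure of $\mathbf{A}$ together with the fact that $\mathbf{M}$ acts only on the physical coordinates. Recall from \eqref{eq:ssaugmats} that $\mathbf{A} = \begin{pmatrix} \mathbf{A}_f & \mathbf{B}_f\,\mathbf{C}_u \\ \mathbf{0} & \mathbf{A}_u \end{pmatrix}$ and $\mathbf{M} = \begin{pmatrix} \mathbf{M}_f \\ \mathbf{0} \end{pmatrix}$, so a feedback $\mathbf{c} = -\mathbf{K}\,\mathbf{g}$ applied to the augmented system enters only through the top block.

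First I would invoke the stabilizability of $(\mathbf{A}_f,\mathbf{M}_f)$ to obtain a gain $\mathbf{K}_f$ such that $\mathbf{A}_f - \mathbf{M}_f\,\mathbf{K}_f$ is Hurwitz. I would then propose the partitioned feedback $\mathbf{K} = (\mathbf{K}_f \ \ \mathbf{0})$, which feeds back only the physical coordinates $\mathbf{f}$ and ignores the latent input $\mathbf{u}$. Because $\mathbf{M}$ has a zero lower block and $\mathbf{K}$ has a zero right block, the product $\mathbf{M}\,\mathbf{K}$ only modifies the $(1,1)$ block, so the closed-loop generator is
\[
  \mathbf{A} - \mathbf{M}\,\mathbf{K}
  = \begin{pmatrix}
      \mathbf{A}_f - \mathbf{M}_f\,\mathbf{K}_f & \mathbf{B}_f\,\mathbf{C}_u \\
      \mathbf{0} & \mathbf{A}_u
    \end{pmatrix}.
\]
This matrix is block upper triangular, so its spectrum is the union of the spectra of $\mathbf{A}_f - \mathbf{M}_f\,\mathbf{K}_f$ and $\mathbf{A}_u$. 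The first set lies in the open left half plane by construction, and the second lies there because the latent force has an exponentially stable (hence Hurwitz) state-space representation. Therefore $\mathbf{A} - \mathbf{M}\,\mathbf{K}$ is Hurwitz, which is exactly the statement that $(\mathbf{A},\mathbf{M})$ is stabilizable.

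The argument is essentially structural and I do not expect a genuine obstacle. The only point requiring care is the observation that the off-diagonal coupling block $\mathbf{B}_f\,\mathbf{C}_u$ is irrelevant to stability: it sits strictly above the diagonal and thus does not enter the characteristic polynomial of the block-triangular closed-loop matrix. An equivalent and equally short route would be the Popov--Belevitch--Hautus test: for any $\lambda$ with $\mathrm{Re}(\lambda)\ge 0$ one shows that a left null vector $(\mathbf{p}^\top,\mathbf{q}^\top)$ of $[\lambda\mathbf{I}-\mathbf{A}\mid\mathbf{M}]$ must vanish, using stabilizability of $(\mathbf{A}_f,\mathbf{M}_f)$ to force $\mathbf{p}=\mathbf{0}$ and the invertibility of $\lambda\mathbf{I}-\mathbf{A}_u$ (from $\mathbf{A}_u$ Hurwitz) to then force $\mathbf{q}=\mathbf{0}$; I would mention this as a cross-check.
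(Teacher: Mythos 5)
Your proof is correct and follows essentially the same route as the paper: both choose the feedback gain to be zero on the latent-force coordinates and a stabilizing gain $\mathbf{K}_f$ on the physical coordinates, and both exploit the resulting block-triangular structure of the closed loop. The only cosmetic difference is that the paper argues that the exponentially decaying $\tilde{\mathbf{u}}$ subsystem cannot destabilize the $\tilde{\mathbf{f}}$ equation, whereas you state the equivalent fact that the spectrum of the block upper-triangular closed-loop matrix is the union of the spectra of the diagonal blocks --- your phrasing (and the PBH cross-check) is, if anything, the tidier formalization.
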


\begin{proof}
The system is stabilizable if there exist a finite gain $\mathbf{G}_c$ such that the system $\mathrm{d}\tilde{\mathbf{g}}/\mathrm{d}t = (\mathbf{A} + \mathbf{M} \, \mathbf{G}_c) \, \tilde{\mathbf{g}}$ is exponentially stable \cite{Wonham:1985}. More explicitly we should have
\begin{equation}
\begin{split}
  \frac{\mathrm{d}\tilde{\mathbf{f}}}{\mathrm{d}t} &= (\mathbf{A}_f + \mathbf{M}_f \, \mathbf{G}_f) \, \tilde{\mathbf{f}}
  + (\mathbf{B}_f \, \mathbf{C}_u + \mathbf{M}_f \, \mathbf{G}_u) \, \tilde{\mathbf{u}}, \\
  \frac{\mathrm{d}\tilde{\mathbf{u}}}{\mathrm{d}t} &= \mathbf{A}_u \, \tilde{\mathbf{u}},
\end{split}
\end{equation}
where we have written $\mathbf{G}_c = \begin{pmatrix} \mathbf{G}_f & \mathbf{G}_u \end{pmatrix}$. Because $\tilde{\mathbf{u}}$ is exponentially decreasing and bounded, we can safely set $\mathbf{G}_u = 0$. The remainder of the system will be stabilizable if there exists a gain $\mathbf{G}_f$ such that $\mathrm{d}\tilde{\mathbf{f}}/\mathrm{d}t = (\mathbf{A}_f + \mathbf{M}_f \, \mathbf{G}_f) \, \tilde{\mathbf{f}}$ is exponentially stable. By our assumption on the stabilizability of $(\mathbf{A}_f,\mathbf{M}_f)$, this is true and hence the result follows.
\end{proof}

The stabilizability also implies that the corresponding LQ controller is uniquely determined \cite{Anderson+Moore:2007}. However, the sole stabilizability is not very useful in practice, because sole stabilizability says that we might have randomly wandering subprocesses in the joint system which practically prevent us from controlling the process exactly where we wish it to go. A much stronger requirement is to require that the full system is controllable. Unfortunately, it turns out that latent force models are never fully controllable in the present formulation, because we cannot control the subsystem corresponding to the GP force. This is summarized in the following theorem.

\begin{theorem}
  Latent force models are not controllable.
\end{theorem}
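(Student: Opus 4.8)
The plan is to apply the Kalman controllability rank test to the augmented pair $(\mathbf{A},\mathbf{M})$ defined in \eqref{eq:ssaugmats} and to show that the controllability matrix can never attain full rank. The crucial structural observation is that $\mathbf{A}$ is block upper-triangular while $\mathbf{M}$ has a vanishing lower block, so the control signal $\mathbf{c}$ can never reach the latent-force states $\mathbf{u}$.

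First I would establish by induction that $\mathbf{A}^{k}\,\mathbf{M} = \begin{pmatrix} \mathbf{A}_f^{k}\,\mathbf{M}_f \\ \mathbf{0} \end{pmatrix}$ for every $k \ge 0$. The base case is immediate from the definition of $\mathbf{M}$, and the inductive step uses the fact that the $(2,1)$ block of $\mathbf{A}$ is zero, so left-multiplying by $\mathbf{A}$ a vector whose lower block vanishes again yields a vector whose lower block vanishes. Second, writing $N$ for the dimension of the augmented state $\mathbf{g}$ and $n_u \ge 1$ for the dimension of the latent force $\mathbf{u}$, I would assemble the controllability matrix
\begin{equation}
\mathcal{C} = \begin{pmatrix} \mathbf{M} & \mathbf{A}\,\mathbf{M} & \cdots & \mathbf{A}^{N-1}\,\mathbf{M} \end{pmatrix}
= \begin{pmatrix} \mathbf{M}_f & \mathbf{A}_f\,\mathbf{M}_f & \cdots & \mathbf{A}_f^{N-1}\,\mathbf{M}_f \\ \mathbf{0} & \mathbf{0} & \cdots & \mathbf{0} \end{pmatrix},
\end{equation}
whose bottom $n_u$ rows are identically zero. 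Hence $\mathrm{rank}(\mathcal{C}) \le N - n_u < N$, so $(\mathbf{A},\mathbf{M})$ fails the rank condition and the model is not controllable.

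I expect no genuine obstacle here: once the block structure is written out, the conclusion is forced. The only conceptual point worth stressing is that the autonomous dynamics of the GP force model---no $\mathbf{c}$ enters the $\mathbf{u}$-equation---render the entire $\mathbf{u}$-subspace unreachable, and this holds regardless of the choice of $\mathbf{A}_f$, $\mathbf{B}_f$, $\mathbf{C}_u$, or $\mathbf{M}_f$. This is precisely what reconciles the theorem with the earlier stabilizability result (Theorem~\ref{the:stab}): the uncontrollable modes are exactly the exponentially stable latent-force modes, so they do not obstruct stabilizability even though they do obstruct controllability.

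As a cleaner alternative I could instead invoke the Popov--Belevitch--Hautus test. For any left eigenvector $\boldsymbol{\phi}^{\top}$ of $\mathbf{A}_u$ with eigenvalue $\mu$, the partitioned row vector $\begin{pmatrix} \mathbf{0} & \boldsymbol{\phi}^{\top} \end{pmatrix}$ satisfies $\begin{pmatrix} \mathbf{0} & \boldsymbol{\phi}^{\top} \end{pmatrix}\mathbf{A} = \mu \begin{pmatrix} \mathbf{0} & \boldsymbol{\phi}^{\top} \end{pmatrix}$ and $\begin{pmatrix} \mathbf{0} & \boldsymbol{\phi}^{\top} \end{pmatrix}\mathbf{M} = \mathbf{0}$, exhibiting a left eigenvector of $\mathbf{A}$ orthogonal to the range of $\mathbf{M}$ and thereby certifying non-controllability directly.
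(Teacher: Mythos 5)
Your proof is correct and rests on the same observation as the paper's: the augmented system \eqref{eq:ssaugmats} is already in Kalman's canonical (controllable/uncontrollable) decomposition, with the latent-force states forming the unreachable part. The paper simply cites this canonical form, whereas you spell out the rank computation (and the PBH certificate) explicitly; both are the same argument.
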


\begin{proof}
  The model is in Kalman's canonical form \cite{Kalman:1963}, where the non-controllable part is the input signal.
\end{proof}

In practice, the non-controllability of the input part is not a problem, as we are actually interested in controlling the physical system part of the model, not the input signal per se. It turns out that the physical system can be controllable even though the full system is not controllable. This result can be obtained as a corollary of so called output controllability (see, e.g., \cite{Ogata:1997}) as follows.

\begin{corollary} %
Assume that $(\mathbf{A}_f,\mathbf{M}_f)$ is controllable. Then the full system is output controllable with respect to the physical system part.
\end{corollary}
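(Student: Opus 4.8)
The plan is to verify the algebraic rank condition for output controllability directly on the joint state-space model, exploiting the block structure of $\mathbf{A}$ and $\mathbf{M}$ in \eqref{eq:ssaugmats}. Recall (see, e.g., \cite{Ogata:1997}) that the system $\mathrm{d}\mathbf{g}/\mathrm{d}t = \mathbf{A}\,\mathbf{g} + \mathbf{M}\,\mathbf{c}$ with output $\mathbf{y}_{\mathrm{out}} = \mathbf{C}_{\mathrm{out}}\,\mathbf{g}$ is output controllable if and only if the output controllability matrix
\begin{equation}
\begin{split}
  \begin{pmatrix} \mathbf{C}_{\mathrm{out}}\,\mathbf{M} & \mathbf{C}_{\mathrm{out}}\,\mathbf{A}\,\mathbf{M} & \cdots & \mathbf{C}_{\mathrm{out}}\,\mathbf{A}^{n-1}\,\mathbf{M} \end{pmatrix}
\end{split}
\end{equation}
has full row rank, where $n$ is the dimension of the joint state $\mathbf{g} = (\mathbf{f}^\top, \mathbf{u}^\top)^\top$. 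Since ``with respect to the physical system part'' refers to the output being $\mathbf{f}$, I would take $\mathbf{C}_{\mathrm{out}} = \begin{pmatrix} \mathbf{I} & \mathbf{0} \end{pmatrix}$, the projection that extracts $\mathbf{f}$ from $\mathbf{g}$.

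The key computation is to evaluate the blocks $\mathbf{C}_{\mathrm{out}}\,\mathbf{A}^j\,\mathbf{M}$. Because $\mathbf{M} = \begin{pmatrix} \mathbf{M}_f^\top & \mathbf{0} \end{pmatrix}^\top$ has a vanishing lower block and $\mathbf{A}$ is block upper triangular, a direct multiplication gives $\mathbf{A}\,\mathbf{M} = \begin{pmatrix} \mathbf{A}_f\,\mathbf{M}_f \\ \mathbf{0} \end{pmatrix}$, so the lower block stays zero. I would then prove by induction on $j$ that $\mathbf{A}^j\,\mathbf{M} = \begin{pmatrix} \mathbf{A}_f^j\,\mathbf{M}_f \\ \mathbf{0} \end{pmatrix}$: in the inductive step the block-triangular $\mathbf{A}$ multiplies a vector whose lower block vanishes, and the off-diagonal coupling $\mathbf{B}_f\,\mathbf{C}_u$ acts only on that (zero) lower block, so it never contributes. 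Applying $\mathbf{C}_{\mathrm{out}}$ then yields $\mathbf{C}_{\mathrm{out}}\,\mathbf{A}^j\,\mathbf{M} = \mathbf{A}_f^j\,\mathbf{M}_f$.

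Consequently the output controllability matrix of the joint system coincides, block by block, with the ordinary controllability matrix $\begin{pmatrix} \mathbf{M}_f & \mathbf{A}_f\,\mathbf{M}_f & \cdots \end{pmatrix}$ of the pair $(\mathbf{A}_f,\mathbf{M}_f)$. By the Cayley--Hamilton theorem the row space of this matrix is unchanged if the powers are truncated at $j = n_f - 1$, where $n_f = \dim\mathbf{f}$, so the controllability of $(\mathbf{A}_f,\mathbf{M}_f)$ guarantees rank $n_f$, i.e.\ full row rank. Hence the output controllability matrix of the full system has full row rank, and the full system is output controllable with respect to $\mathbf{f}$, as claimed.

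I expect the only genuine subtlety to be the dimensional bookkeeping in the Cayley--Hamilton truncation: the joint state dimension $n$ exceeds $n_f$, so one must argue that the extra powers $\mathbf{A}_f^{n_f}, \ldots, \mathbf{A}_f^{n-1}$ add no new rows beyond those already spanned. The structural induction establishing $\mathbf{A}^j\,\mathbf{M} = \begin{pmatrix} \mathbf{A}_f^j\,\mathbf{M}_f \\ \mathbf{0} \end{pmatrix}$ is otherwise routine given the triangular form of $\mathbf{A}$ and the zero lower block of $\mathbf{M}$.
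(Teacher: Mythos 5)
Your proposal is correct and is exactly the argument the paper has in mind: its proof simply says to write down the output controllability condition and observe that it reduces to controllability of $(\mathbf{A}_f,\mathbf{M}_f)$, which is precisely the block computation $\mathbf{C}_{\mathrm{out}}\,\mathbf{A}^j\,\mathbf{M} = \mathbf{A}_f^j\,\mathbf{M}_f$ you carry out. You have merely filled in the routine details (the induction using the block-triangular structure and the Cayley--Hamilton truncation) that the paper leaves implicit.
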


\begin{proof}
This can be derived by writing down the output controllability condition \cite{Ogata:1997} and noticing that it reduces to controllability of the physical system part.
\end{proof}

The above result is useful when the system is fully observable as well. Then it ensures that we can successfully control the physical system part although the full latent force model remains uncontrollable. However, if the latent force model is not fully observable, then the latent force model inherently causes disturbance to the physical system and although we can keep the system stable, the state cannot be forced to follow a given trajectory.

As a conclusion, for all practical purposes a (time-invariant) latent force model is controllable, if it is observable and the following matrix has a full rank for some $m$:
\begin{equation}
  \mathcal{C} = \begin{pmatrix}
  	\mathbf{M}_f & \mathbf{A}_f \, \mathbf{M}_f & \hdots & \mathbf{A}_f^{m-1} \, \mathbf{M}_f
  \end{pmatrix}.
\end{equation}

\section{Experimental Results}
\label{sec:experimental}

In this section, we illustrate the latent force model framework in two different problems: a controlled second order ordinary differential equation modeling a spring and a controlled heat source in two dimensions.

\subsection{Controlled ODE Model} \label{sec:ctrl_spring}

Our first illustrative example corresponds to the second order differential equation model described in \eqref{eq:sde0}, which physically can be considered as a damped spring. We consider a 100-second interval, where the first 50 seconds are used for learning the hyperparameters of the (state-space) GP after which the hyperparameters are kept fixed. We then continue obtaining 40 seconds of additional measurements of the system after which the measurements stop while we still continue to run the system for 10 seconds. 

The unknown input signal is $u(t) = \sin(0.23 \, t) + \sin(0.13 \, t)$ for $t \in [0,100]$, the parameters $\lambda = 0.1$ and $\gamma = 1$, and we assume that only the position of the spring $f(t)$ is measured in time intervals of $\Delta t = 0.01$ seconds. The measurements contain Gaussian noise with a relatively small standard deviation $0.01$ -- the small noise is selected to better highlight the differences between the controllers. 

We selected the Gaussian process prior for the input process $u(t)$ to have a zero mean and squared exponential (SE) covariance function of the form $K(t,t') = \sigma^2 \exp[-(t-t')^2 / \ell^2]$ which was approximated with state-space model using 4/8-order Pad\'e approximant \cite{Sarkka+Piche:2014}. During the training phase, the parameters $\sigma$ and $\ell$ were estimated by maximizing the marginal likelihood. The simulated open-loop system along with the Gaussian process interpolation (implemented in state-space with a Rauch--Tung--Striebel smoother) and extrapolation results are shown in Figure~\ref{ode2:output:subfig:mlgp}. It can be seen that the GP follows the true position well until the end of the measurements, after which it quite quickly reverts to the prior mean (which in this case is zero). Thus the extrapolation accuracy of the GP model is fairly limited, but fortunately the uncertainty estimate of the GP indicates that this should be expected. 

\begin{figure}[!ht]
\centering
       \includegraphics[width=0.9\columnwidth]{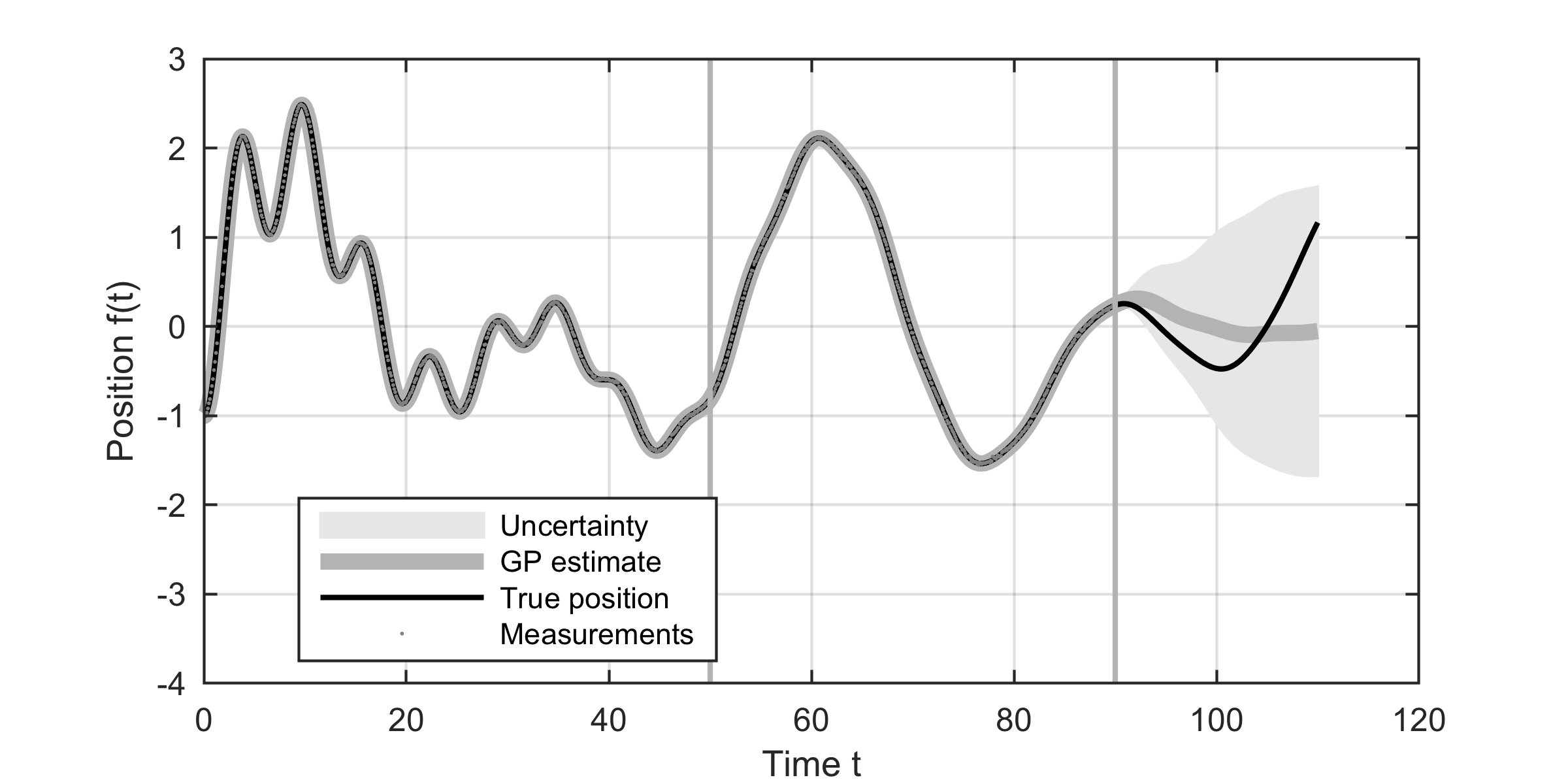}
\caption{The open-loop spring position $f(t)$ and measurements $\{y_k\}_{k=1}^n$ (which overlap with the position trajectory in the figure) along with the GP estimate and its $95\%$ uncertainty quantiles. The GP was trained using the first 50 seconds of data, after which we obtained measurements for additional 40 seconds. These time intervals are indicated with the vertical lines. 
\label{ode2:output:subfig:mlgp}}
\end{figure}

\begin{figure}[!ht]
\centering
       \includegraphics[width=0.9\columnwidth]{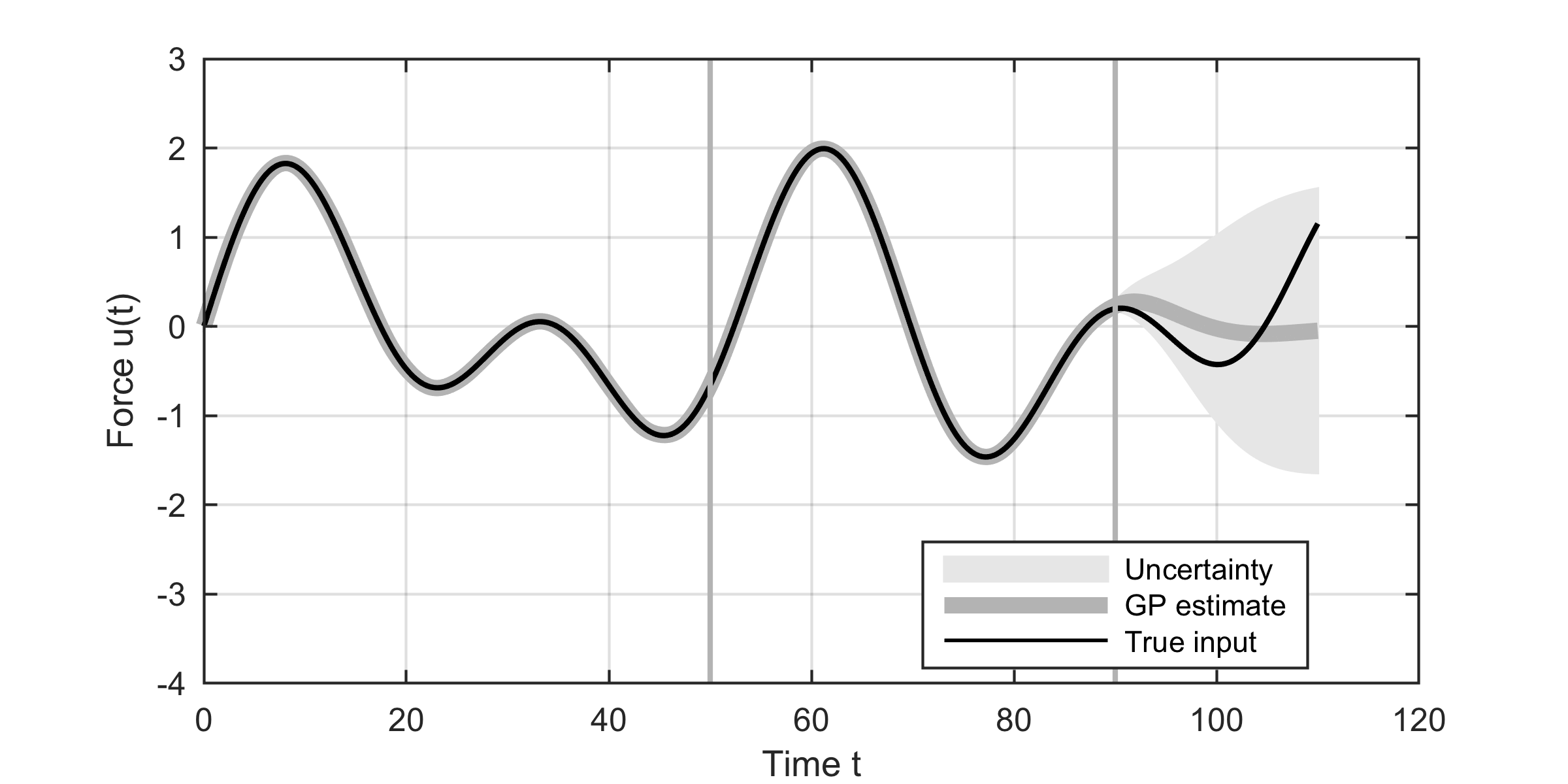}
\caption{The input signal $u(t)$ to the spring model and its GP estimate along with the $95\%$ uncertainty quantiles. \label{ode2:source:subfig:mlgp}}
\end{figure}

The result for inference for the input function $u(t)$ is shown in Figure \ref{ode2:source:subfig:mlgp}. Similarly to the position, the input estimate is good until the end of measurements after which it reverts to the zero mean. 

To demonstrate the benefit of modeling of the input signal as GP in the stochastic control context,  we consider the model \eqref{eq:sde0} with linear closed loop optimal control design for $c(t)$. Similarly to the case shown in Figures~\ref{ode2:output:subfig:mlgp} and \ref{ode2:source:subfig:mlgp}, we run the first 50 seconds without control and train the hyperparameters during this period. After that, we turn on the control signal aiming to keep the spring at zero. We consider two ways of designing the controller which were discussed in Section~\ref{sec:controlled_basic}: using the assumed separability design based on putting $u(t) = 0$ and a controller which is designed by taking into account the existence of the input signal as described in the same section. The results of using the basic linear quadratic regulator ("Basic LQR"), that is, the certainty equivalent design, and the result of using the joint LFM control ("LFM LQR") are shown in Figure~\ref{fig:cntl_spring}. It can be seen that the LFM controller is able to maintain the system much better near the origin than the basic controller. The control signals are shown in Figure~\ref{fig:cntl_signal}.

\begin{figure}[!t]
\centering
\includegraphics[width=0.9\columnwidth]{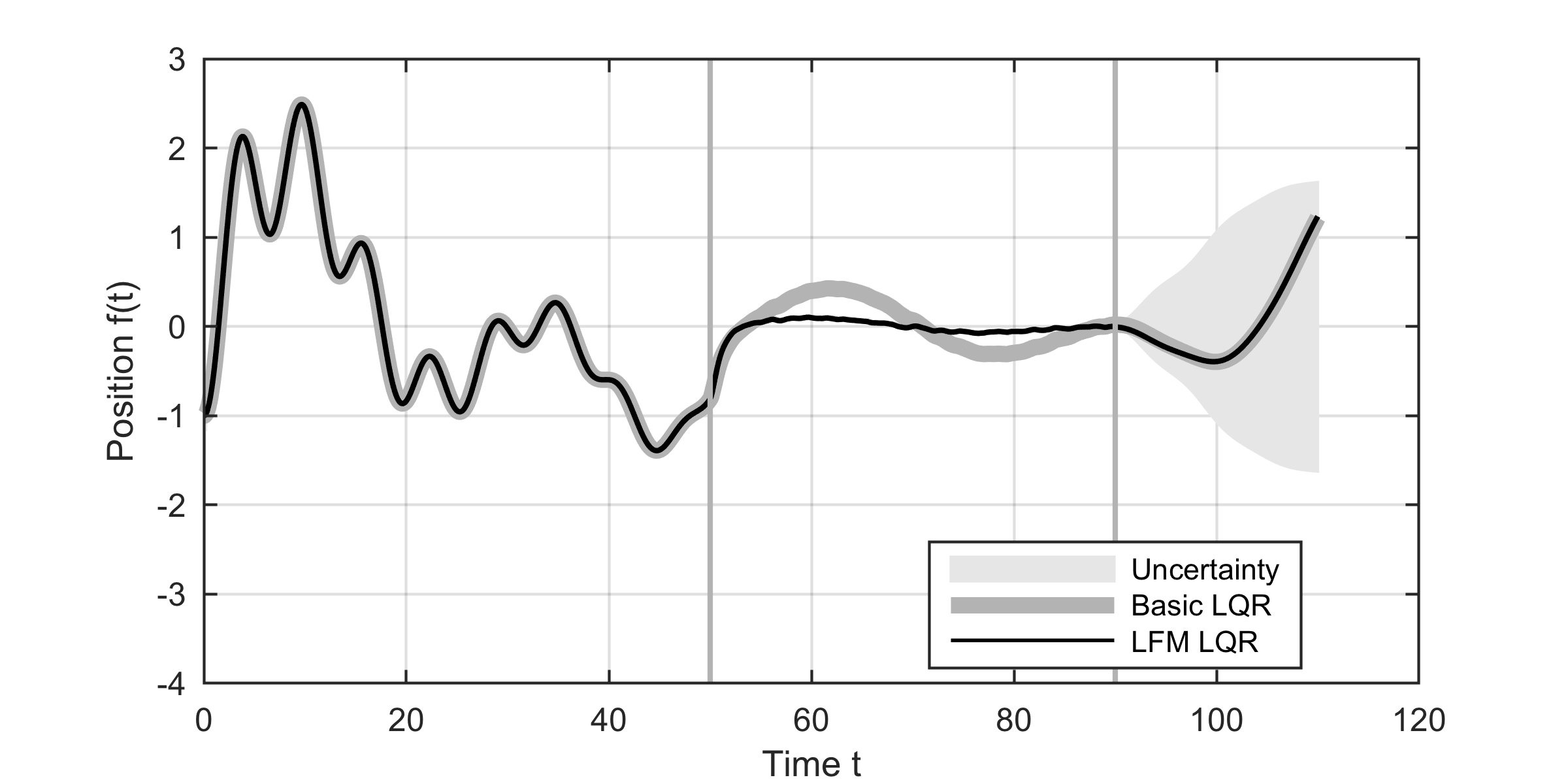}
\caption{Result of controlling the spring model with Basic LQR and LFM LQR. It can be seen the that control designed for the full LFM outperforms the basic LQR significantly. The average position tracking error for the Basic LFM was approximately $0.27$ units whereas in the case of LFM LQR it was approximately $0.11$ units.\label{fig:cntl_spring}}
\end{figure}

\begin{figure}[!t]
\centering
\includegraphics[width=0.78\columnwidth]{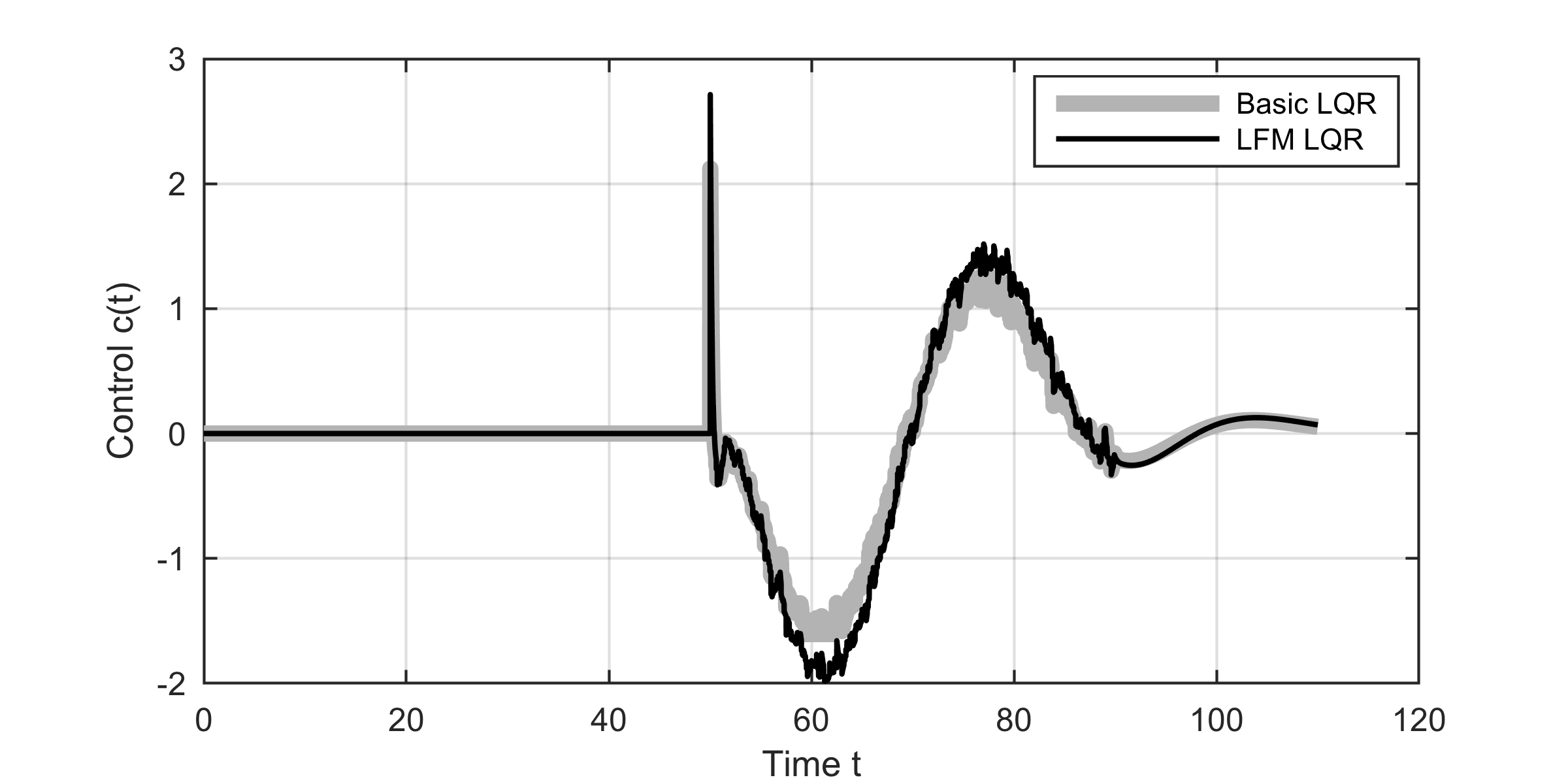}
\caption{The LQR control signals. \label{fig:cntl_signal}}
\end{figure}

\subsection{Controlled heat equation}

In this experiment we consider the controlled heat equation \eqref{eq:spde0}, where $\mathbf{x} \in \mathbb{R}^2$. Figure \ref{cartoon:heatpde} is a cartoon representation of the simulated scenario which is a heat source moving across a 2D spatial field. The field is measured at a discrete grid and the measurements are corrupted by Gaussian noise. In the simulation, the input signal $u(\mathbf{x},t)$ is the heat generated by the moving source and the aim is to reconstruct $f$ and $u$ from noisy observations as well as design an optimal control signal $c(\mathbf{x},t)$, which aims to regulate the temperature $f(\mathbf{x},t)$ to zero.

\begin{figure}[!t]
\centering
\includegraphics[width=0.45\columnwidth]{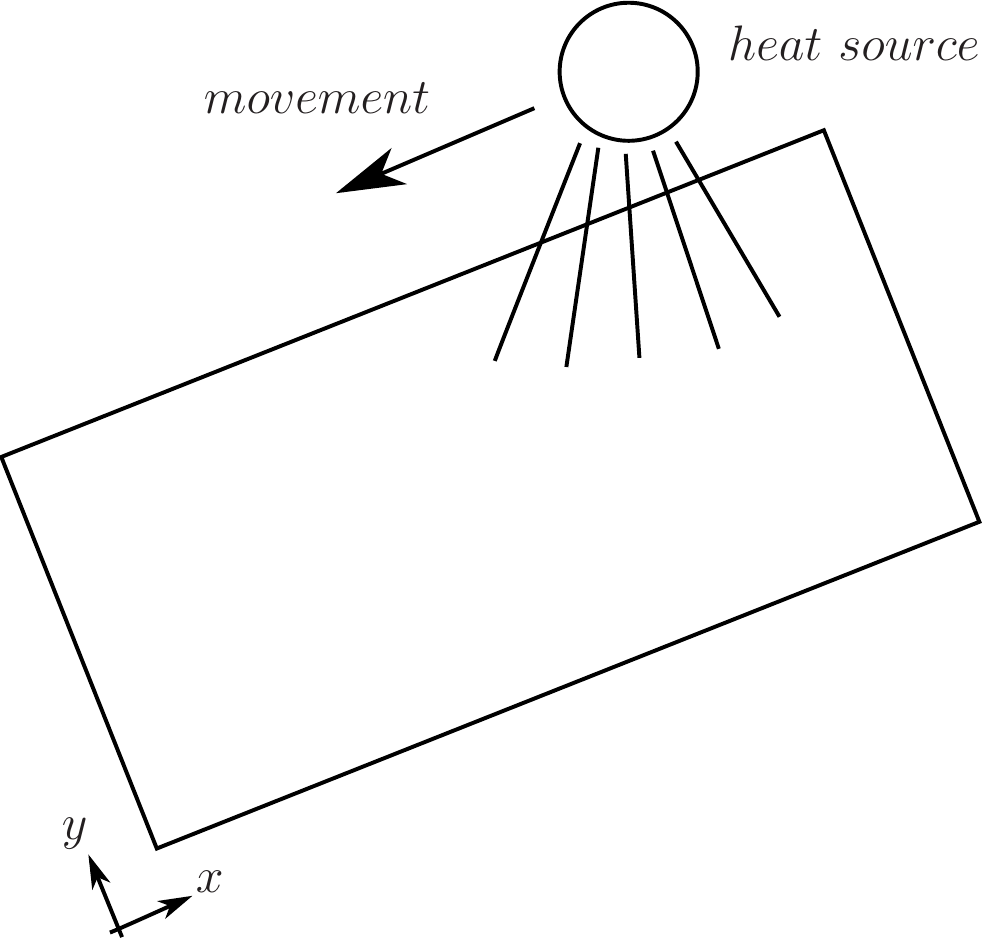}
\caption{A cartoon representation of a heat source moving across a 2D spatial field.}
\label{cartoon:heatpde}
\end{figure}

In the simulation, we used the parameters $\lambda = 0.2$ and $D = 0.001$ and the heat source was moving for 10 seconds from top-right to bottom-left direction and then it was turned off. The temperature then increases at the application point and when the heat source moves away, the position starts cooling down. Figures~\ref{fig:heat_open_x} and \ref{fig:heat_open_u} show the temperature field and the heat source at time $t = 6.9$ when no control is applied. 

\begin{figure}[!t]
\centering
   \subfloat[Temperature field $f(\mathbf{x},t)$
   \label{fig:heat_open_x}]{%
       \includegraphics[width=0.49\columnwidth]{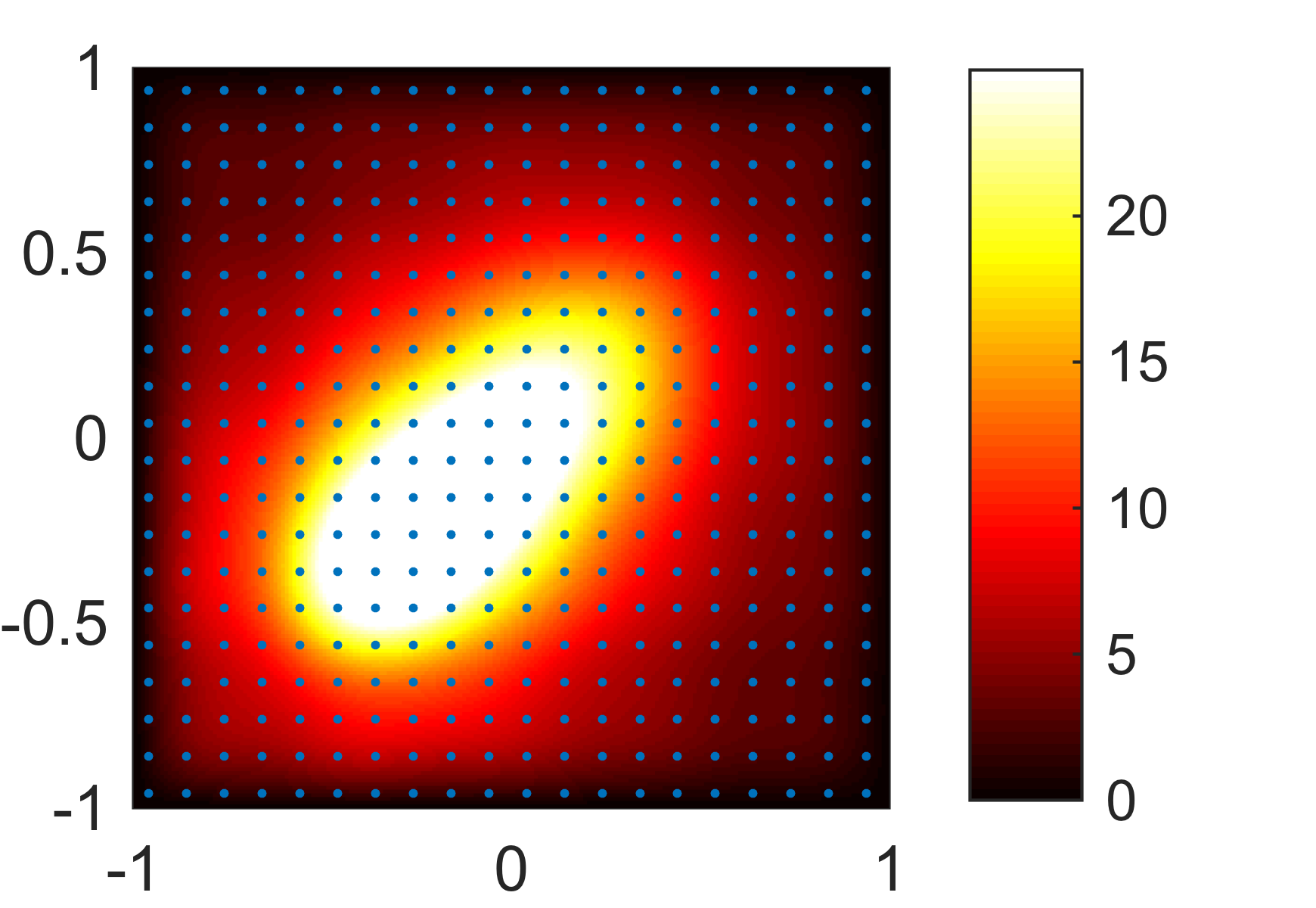}
         }
   \subfloat[Source field $u(\mathbf{x},t)$
   \label{fig:heat_open_u}]{%
       \includegraphics[width=0.49\columnwidth]{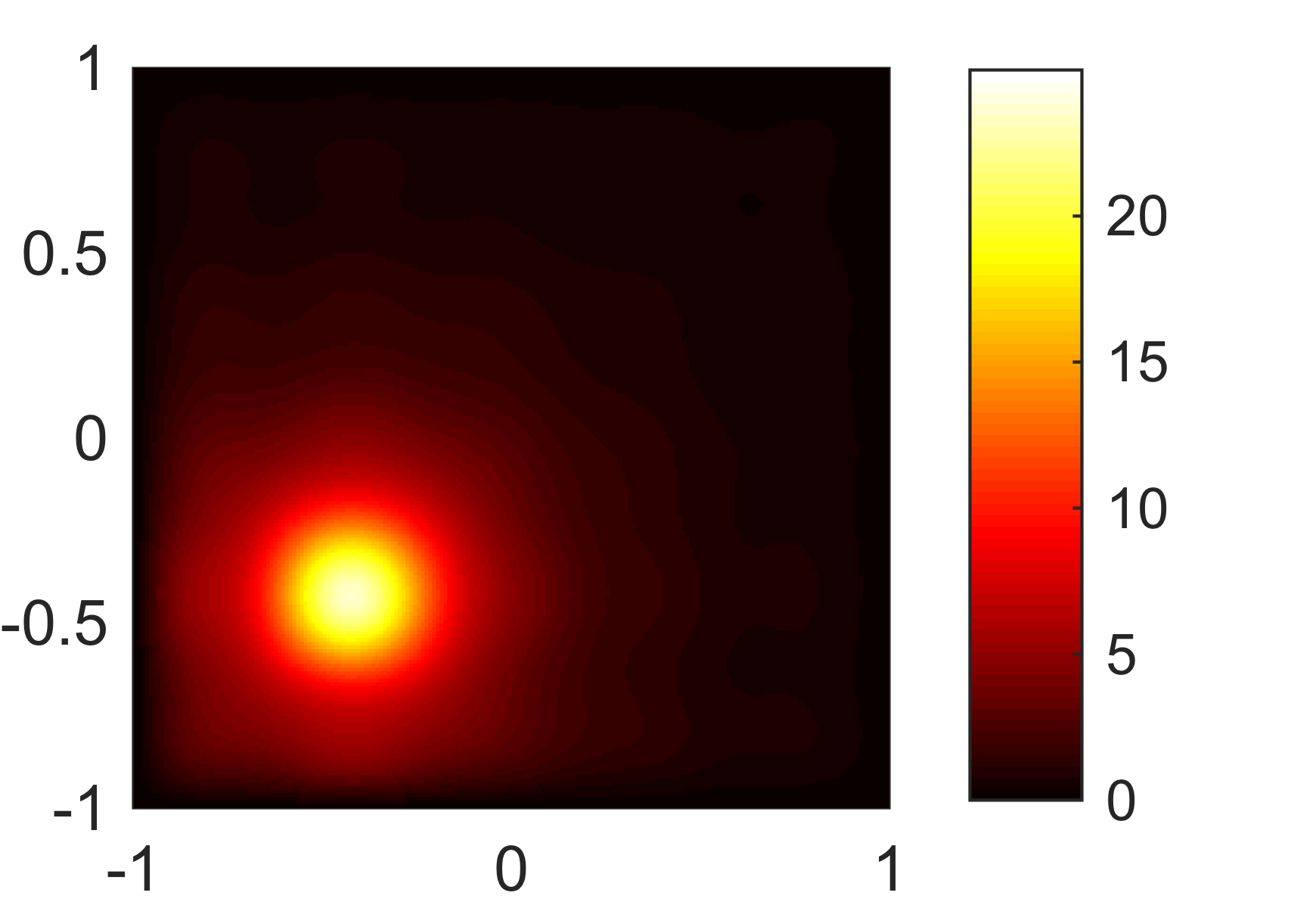}
         }
\caption{The temperature function $f(\mathbf{x},t)$ and the source function $u(\mathbf{x},t)$ at time $t = 6.9$. The small circles mark the positions of the measurements.}
\end{figure}

We then formed a Fourier-basis approximation to the PDE (with $100$ basis functions) and designed two controllers for it---one using an assumed separability design ("Basic LQR") and one by taking the input signal into account ("LFM LQR"). We used SE covariance functions for the latent force model in both time and space directions. A Kalman filter was used to estimate the physical system and input signal states from temperature measurements with low variance ($\sigma^2 = 0.01^2$) and the controller was applied using the estimate.

\begin{figure}[!t]
\centering
   \subfloat[Field $f(\mathbf{x},t)$ with Basic LQR 
   \label{fig:heat_lq_x}]{%
       \includegraphics[width=0.49\columnwidth]{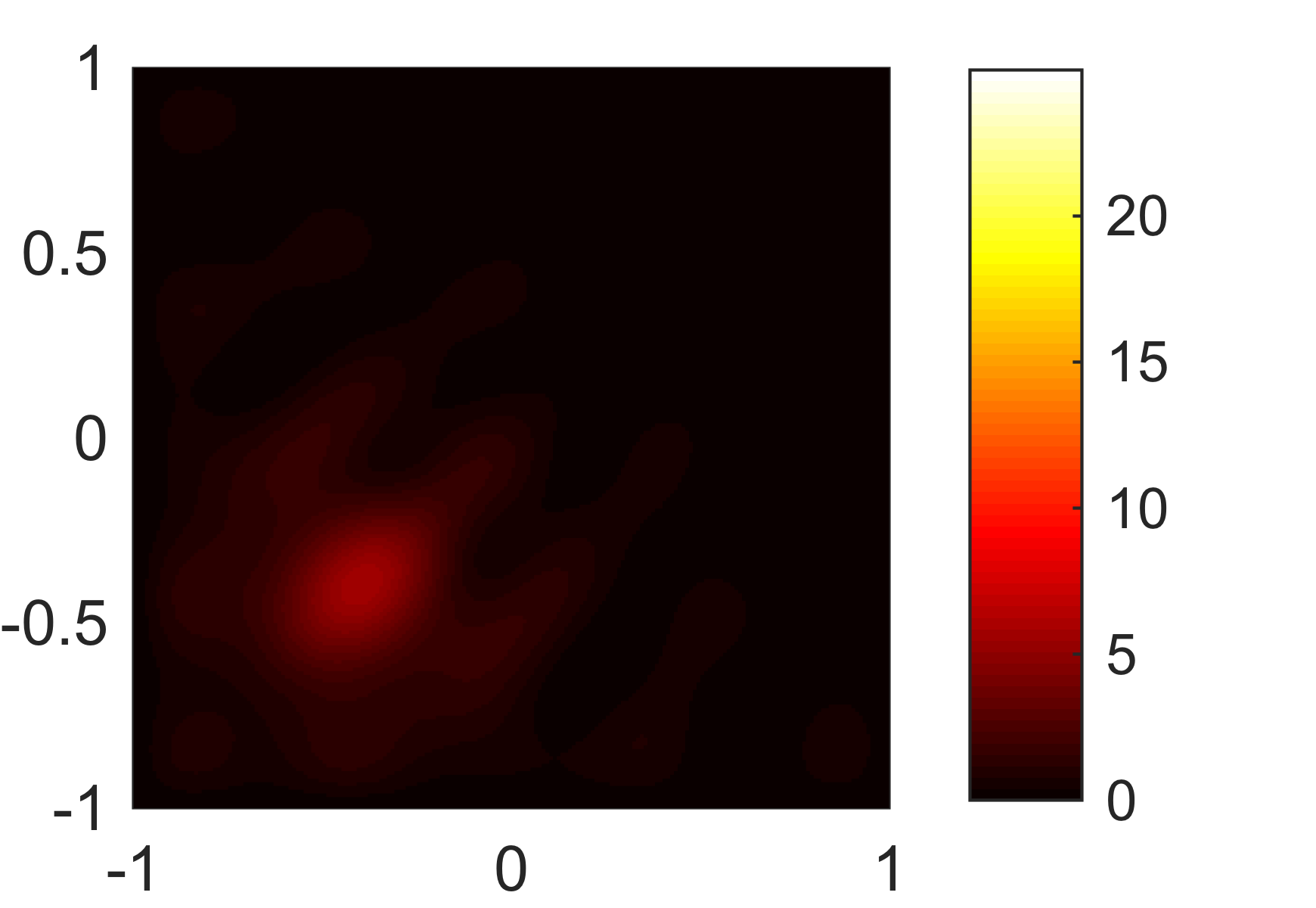}
         }
   \subfloat[Field $f(\mathbf{x},t)$ with LFM LQR
   \label{fig:heat_lfm_lq_x}]{%
       \includegraphics[width=0.49\columnwidth]{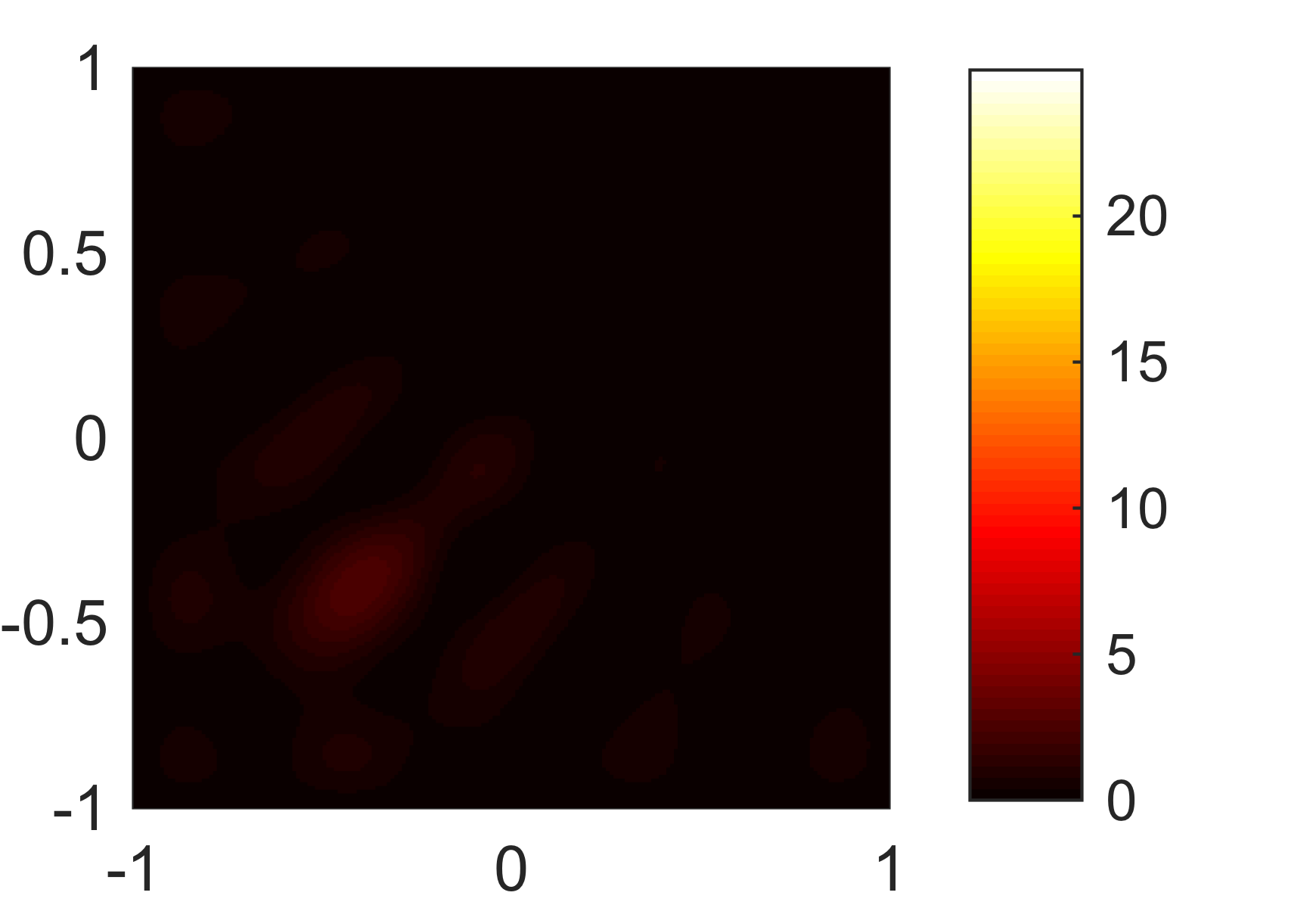}
         }
  \\
   \subfloat[Maximum temperatures
   \label{fig:heat_maxtemp}]{%
       \includegraphics[width=0.45\columnwidth]{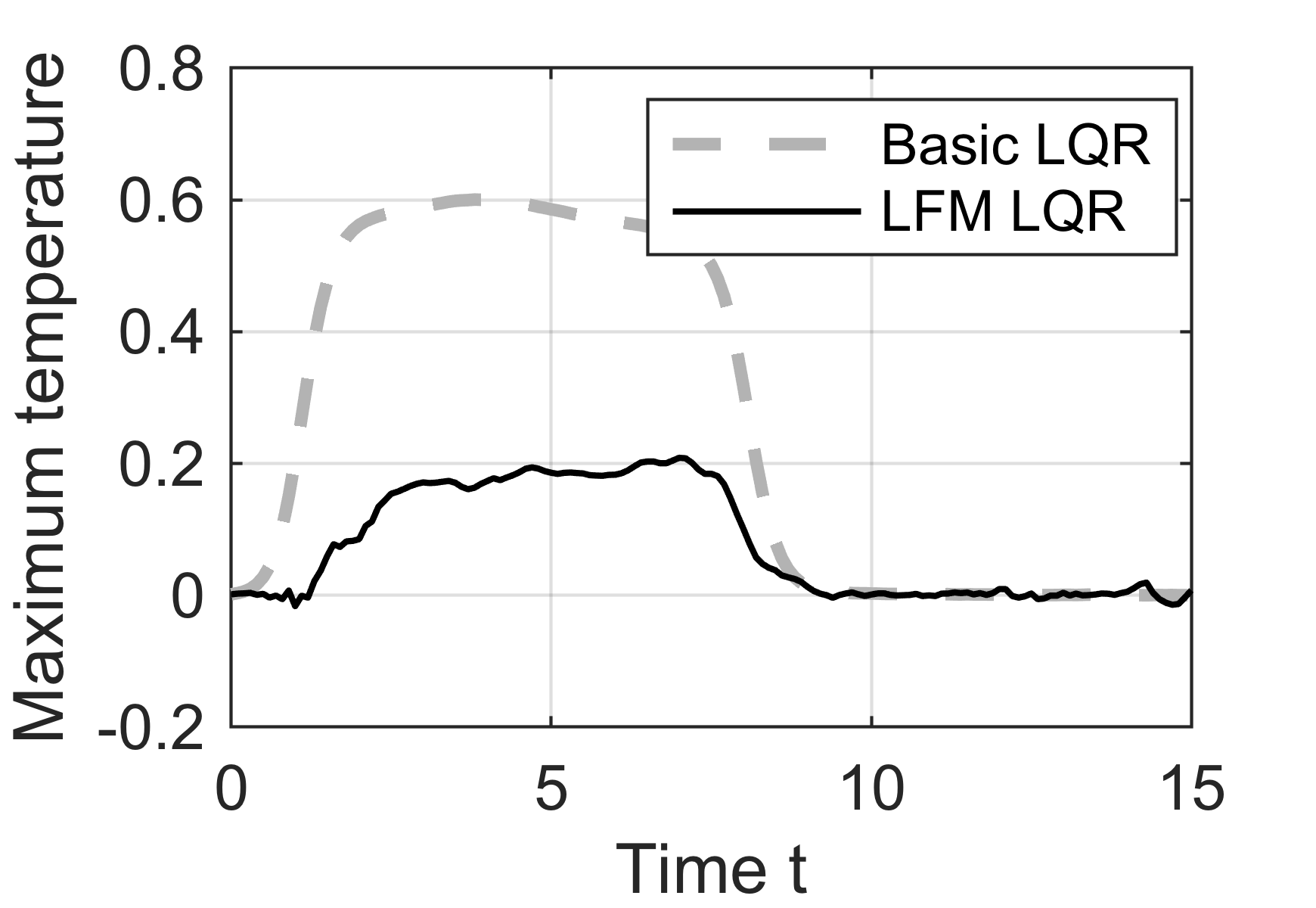}\hspace*{0.04\columnwidth}
         }
   \subfloat[LFM control signal
   \label{fig:heat_lfm_lq_c}]{%
       \includegraphics[width=0.49\columnwidth]{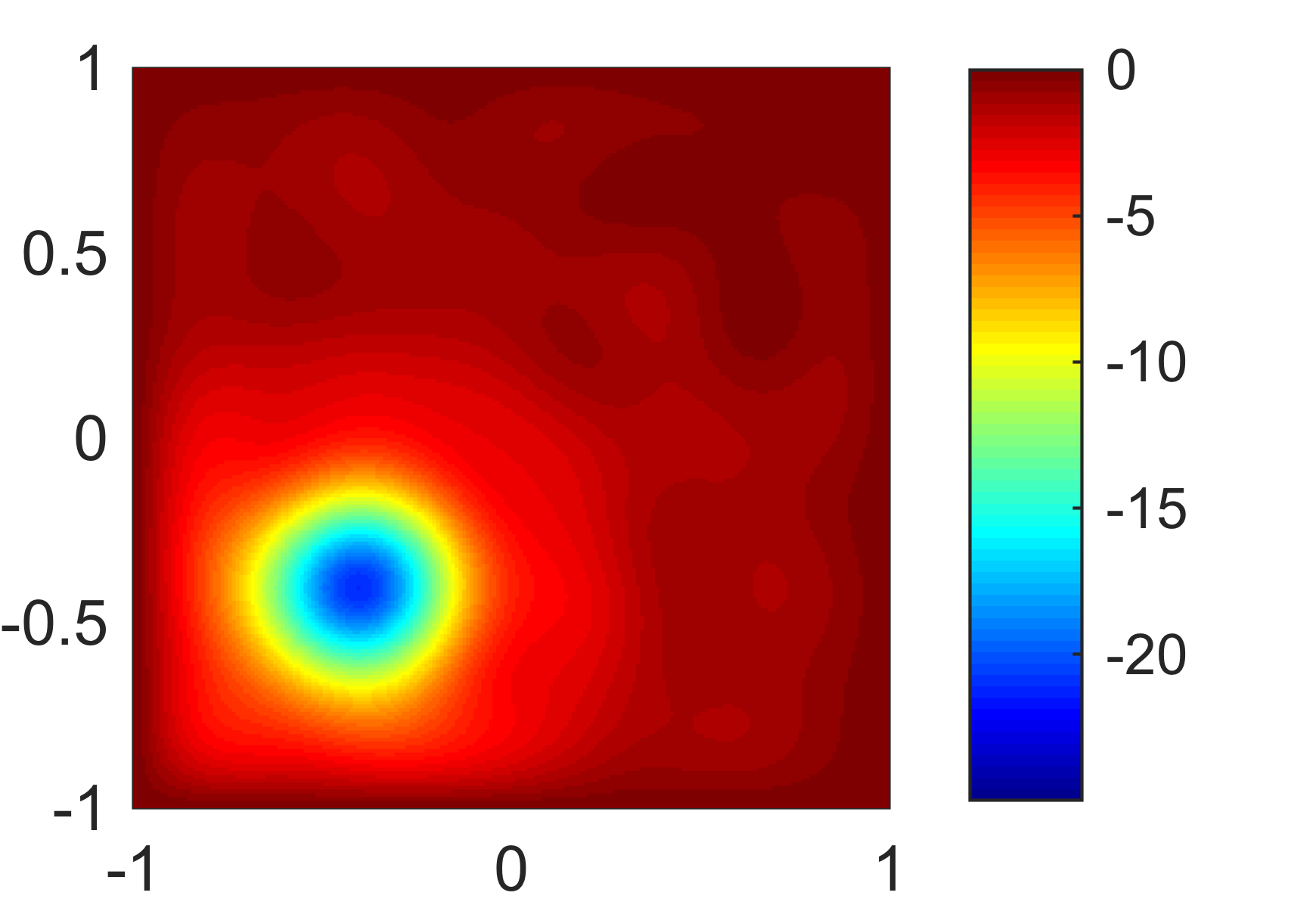}
         }
\caption{The results of using Basic LQR and LFM LQR controllers to regulate the temperature field to zero. It can be seen from Figures~\ref{fig:heat_lq_x},  \ref{fig:heat_lfm_lq_x}, and \ref{fig:heat_maxtemp} that LFM LQR is able to keep the temperature closer to zero than Basic LQR. Figure~\ref{fig:heat_lfm_lq_c} shows an example control signal which can be see to effectively cancel out the input signal part as one would expect.}
\end{figure}

Figures~\ref{fig:heat_lq_x} -- \ref{fig:heat_lfm_lq_c} show the results when the controllers were used. It can be seen that the LFM LQR provides a significantly smaller tracking error.

\section{Conclusion and Discussion}

In this paper we have studied a latent force model (LFM) framework for learning and control in hybrid models which are combinations of first-principles (physical) models and non-parametric Gaussian process (GP) models as their inputs. In particular, we have considered stochastic control problems associated with these models as well as analyzed the observability and controllability properties of the models. It turned out that although the models are often observable, they typically are not fully controllable. However, they still are output controllable with respect to the physical system part and thus the control problem is well defined. We have also experimentally shown that learning the input signal improves the control performance. This is in line with the theoretical result that the optimal control is a combination of a classical control without an input signal and an additional term that modifies the control using the knowledge on the input signal.

The framework also allows for a number of extensions. For example, introducing non-linearities in the measurement model can be tackled by replacing the Kalman filter with its non-linear counterpart (e.g., \cite{Jazwinski:1970,Maybeck:1982,Sarkka+Sarmavuori:2012,Sarkka:2013}), and another possible extension is to include an operator or a functional into the measurement model of a spatio-temporal system (e.g. \cite{Sarkka:2011,Sarkka+Hartikainen:2012,Sarkka+Solin+Hartikainen:2013}) leading to an inverse problem type of model. With these extensions the inference in the resulting system can still be performed using Kalman filter techniques and the control problem can be kept intact. In the non-linear case this corresponds to an assumed certainty equivalence approximation to the solution. It would also be possible to consider non-linear differential equation (physical) models which are driven by Gaussian processes. In that case we would need to resort to approximate Kalman filtering methods along with approximate non-linear control methods (e.g. \cite{Maybeck:1982b,Stengel:1994,Erdem:2004}).

Finally, an important practical issue is the choice of appropriate covariance function for the GP. As highlighted by the extrapolation experiment in Section~\ref{sec:ctrl_spring}, the typically used squared exponential covariance function is not always a good choice when extrapolation capability is required. The same applies to all stationary covariance functions, because they always revert to the prior mean after the data ends. One way to cope with this problem would be to use non-stationary covariance functions such as once or twice integrated stationary GPs which, instead of reverting to the prior mean, revert to zero derivative (constant prediction) or zero second derivative (linear prediction). An alternative approach would be to augment unknown constants or linear in parameters functions into the state-space model which corresponds to replacing the zero mean function with a linear in parameters model (cf.~\cite{Rasmussen+Williams:2006}). However, for these kinds of models the present observability and controllability results no longer apply as such.

\section*{Acknowledgment}

Simo S\"arkk\"a would like to thank Academy of Finland for financial support. Mauricio A. \'Alvarez has been partially financed by the EPSRC Research Project EP/N014162/1. The work was done when Neil D. Lawrence was at the University of Sheffield.

\bibliographystyle{IEEEtran}
\bibliography{reviewlfm}

\end{document}